\newcommand{\Z}{\mathbb{Z}}
\newcommand{\EX}{\mathbf{E}}
\newcommand{\PR}{\mathbf{P}}
\newcommand{\R}{\mathbb{R}}
\newtheorem{theorem}{Theorem}
\newtheorem{lemma}{Lemma}
\newtheorem{proposition}{Proposition}
\newtheorem{remark}{Remark}
\def\mc{\mathcal}
\def\mbf{\mathbf}
\def\UncertainPi{\textsc{Noback}}
\def\SSum{\textsc{SubsetSum}}
\def\ConcMin{\textsc{ConcMin}}
\def\BufferBanditMathMode{i\textsc{Festival}} 
\def\BufferBandit{{\em i}\textsc{Festival}}
\def\SelectUsers{\textsc{SelectUsers}}
\def\AllocateChannels{\textsc{AllocateChannels}}
\def\BlockVariable{\textsf{b}}
\def\algo#1{\ifnum#1=1
  		{\text{{\sc step\_flex}}}      
               \else
          \ifnum#1=2 
           {\text{{\sc step\_semiflex}}}
          \else
	{\text{{\sc step\_inflex}}}
       \fi\fi}
\begin{document}

\title{Multi-channel Resource Allocation for Smooth Streaming: Non-convexity and Bandits \thanks{Akhil Bhimaraju is with the Dep. of Electrical and Computer Engineering, University of Illinois at Urbana-Champaign, Urbana, IL 61801, USA. Email: akhilb3@illinois.edu.} \thanks{Atul A. Zacharias is with the Whiting School of Engineering, Johns Hopkins University, Baltimore, MD 21218, USA. Email: atulantony1998@gmail.com.} \thanks{Avhishek Chatterjee is with the Dept. of Electrical Engineering, Indian Institute of Technology Madras, Chennai, TN 600036, India. Email: avhishek@ee.iitm.ac.in.}}

\author{Akhil Bhimaraju, Atul A. Zacharias and Avhishek Chatterjee}
\maketitle

\begin{abstract}
User dissatisfaction due to buffering pauses during streaming is a significant cost to the system, which we model as a non-decreasing function of the frequency of buffering pause.
Minimization of total user dissatisfaction in a multi-channel cellular network  leads  to a non-convex problem. 
Utilizing a combinatorial structure in this problem, we first propose a polynomial time joint  admission control and channel allocation algorithm which is provably (almost) optimal. This scheme assumes that the base station (BS) knows the multimedia frame statistics of the streams. In a more practical setting, where these statistics are not available a priori at the BS, a learning based scheme with provable guarantees is developed. This learning based scheme has relation to regret minimization in multi-armed bandits with non-i.i.d. and delayed reward (cost). All these algorithms require none to minimal feedback from the user equipment to the base station regarding the states of the media player buffer at the application layer, and hence, are of practical interest. 
\end{abstract}
\begin{IEEEkeywords}Resource allocation; Streaming; Multi-channel downlink; Performance analysis
\end{IEEEkeywords}
\section{Introduction}
\label{sec:intro}
Frequent buffering pauses (a.k.a. playout stalls) during multimedia streaming is a source of great dissatisfaction for cellular users. As multimedia is the most significant part of internet traffic today, operators must strive to provide a smooth streaming experience. During video or multimedia streaming, data  transmitted by the base station (BS) are first cached in the media player buffer at the application layer. From this, the media player consumes (plays) one multimedia frame at a time at a rate dictated by the quality,  encoding scheme and dynamics of the content. Whenever the buffer does not have enough data to play the current frame, there is a pause. 

In this work, we address user dissatisfaction due to buffering pause in a multi-channel cellular network. Our formulation captures buffering pause using queuing models for the media player buffers and user dissatisfaction as a function of the frequency of pause. Unlike the traditional stochastic network optimization setting \cite{Neely2010},  this formulation leads to cost-minimization problems with non-convex structures. 
Exploiting combinatorial structure inside the apparently continuous non-convex problem, we develop near optimal resource allocation algorithms. We consider both the scenarios, where the BS knows and where the BS does not know the statistics of the streams a priori. The latter case has connections  to multi-armed bandits with non-i.i.d. and delayed cost. Our proposed algorithms require little to no feedback from the user equipment regarding the buffer states and are compatible with the current cellular implementations.

\subsection{Related literature}
There is a rich body of work on real time scheduling \cite{HouBK2009,HouK2010,JaramilloS2010,LiEL2013,KimLM2014,HouS2016}.
Recently there have been many works on age of information which develop scheduling policies to ensure freshness of the received information in applications like real-time sensing and internet of things \cite{KaulYG2012,CostaCE2014,KamKNE2015,SunUYKS2016,HuangM2015,JiangZNY2019,KadotaSM2018}.   

Dutta et al. \cite{DuttaSACKK2012} and Bhatia et al. \cite{BhatiaLNS2014} studied resource allocation to mitigate pause by utilizing the media player buffers. Dutta et al.  greedily maximized a surrogate, the minimum expected `playout lead' at each scheduling epoch. 
Hou et al. \cite{HouP2015} showed that in a single channel, underloaded network, it is possible to take the frequencies of pause  to zero and also characterized their diffusion limits. 
Xu et al. \cite{XuAltman} analyzed buffer starvation statistics under different service and frame consumption statistics.
Singh et al. \cite{SinghK2015} formulated the problem of minimizing frequency of pause as a Markov decision process and derived a threshold policy. 
This was further extended to obtain a decentralized policy for a distributed network  \cite{SinghK2019}.

In spirit, our work shares most similarity with \cite{HouP2015,SinghK2015,SinghK2019}, which aim to directly address the issue of buffering pause in a single-channel network using a queuing model for the media player buffer. However, there are many differences between those and the current work, some of which are discussed next.
\begin{itemize}
\item The modern cellular networks use  OFDMA and are often overloaded either due to high user density and shadowing in urban areas or low BS density and high pathloss in rural areas. So, in contrast to \cite{HouP2015,SinghK2015,SinghK2019}, our model captures a (possibly) overloaded multi-channel system. 
\item As it is impossible to take the frequency of pause for each user to zero in an overloaded network, we aim to minimize the total user dissatisfaction. Each user's dissatisfaction is modeled as a non-decreasing function of their respective frequency of pause and captures user expectations, which may depend on their data plan, the type of content, and personal factors.
\item The buffers at the application layer can easily store a few minutes  of future content. However,  reporting the buffer states from the application layer of the user to the the MAC layer of the BS at regular intervals is resource consuming, and is not provisioned in the current cellular implementations. So, in contrast to \cite{HouP2015,SinghK2015,SinghK2019}, we assume the media buffer to be sufficiently large and design allocation schemes which are either agnostic of buffer states or  access buffer states  infrequently  (with asymptotically  vanishing rate).
\item {\color{black}From the buffer, the player consumes content as multimedia frames (I, P or B) and the number of frames per second (fps) depends on the content. For current multimedia encoding ($40$--$60$ fps), on average one multimedia frame is consumed per $1.5$--$3$ OFDMA frames, and the multimedia and OFDMA frames are not in alignment. Moreover, the amount of data in a frame, more specifically, in P and B frames, varies with scene dynamics.  Thus, in practice, the amount of data consumed per OFDMA frame by the player from the buffer is stochastic. In \cite{HouP2015,SinghK2015,SinghK2019}, periodic frame consumption by the player was assumed. In this work, we move closer to practice by assuming stationary and ergodic consumption processes. }
\end{itemize}

It is known that servers can adjust (degrade) stream resolutions to suit network conditions (congestion, etc.) \cite{VanderSchaarC2011,RejaieYHE2000,ZhangZZW2006,OliveiraKS1998,BhattacharyyaBR2019,GuttermanFG2020}. 
We first study the scenario where all contents are streamed at their lowest resolutions acceptable to the respective users, which are possibly different for different contents and users. (This captures the case where a user refuses to watch a content below a certain resolution.)  Later we show how our algorithms can be adapted to optimally address users' dissatisfaction due to streaming at degraded resolutions. 
Thus, this work addresses both buffering pause and quality degradation, arguably, the two most pressing issues in streaming.

This paper is organized as follows. The system model and the objective are discussed in Sec.~\ref{sec:model}. Resource allocation schemes, their performance guarantees and proof sketches of the main results are presented in Sec.~\ref{sec:knownpi} and \ref{sec:unknownpi}, when stream parameters (statistics) are known and unknown, respectively.  
Further, in Sec.~\ref{sec:noback}, we also discuss the case where the base station does not have
access to (even infrequent) feedback on the consumption process, but knows a prior
on the parameters of the consumption's distribution statistics.
Simulations strengthening the analytical results are reported in Sec.~\ref{sec:simulation}.
Quality degradation is addressed in Sec.~\ref{sec:quality} followed by conclusion in Sec.~\ref{sec:discussion}.
For detailed proofs, please see the appendices at the end of this manuscript.

\section{System Model and Objective}
\label{sec:model}
We consider the time-slotted OFDMA downlink of a cellular base station (BS) with $m$ channels. The BS is streaming multimedia content to $n$ users over these $m$ wireless fading channels.  In time-slot $s \in \{1, 2, \ldots\}$, user $i \in [n]$ can receive $h_{i,j}(s)$ bits on channel $j \in [m]$. We use $[v]$ to denote the positive integers $\{1, 2, \ldots, v\}$.

The BS decides the allocation of channels and time-slots to users in the beginning of an OFDMA frame, which consists of $\mc{E}$ slots. To avoid confusion with media frames, in the rest of this paper, we refer to OFDMA frames as {\em epochs} and media frames as frames. Epochs are indexed by $t$, i.e., epoch $t$ is composed of time-slots
$(t-1)\mc{E}+1\le s \le t \mc{E}$.

{\color{black}We define $\mbf{H}(t)$ to be an $\R^n \times \R^m \times \R^\mc{E}$-valued process with elements $\{h_{i,j}(s): i \in [n], j\in [m], (t-1)\mc{E}+1 \le s \le t \mc{E}\}$. Here $h_{i,j}(s)$ is the amount of data that the BS sends to user $i$ on channel $j$ in time-slot $s$. This depends on the fading state of the channel and the adaptive modulation and coding  (AMC) techniques employed at the physical layer. 
As there are only finite number of modulation schemes available at the BS, $h_{i,j}(s)$ takes values in a finite set.}

The BS is infinitely backlogged, i.e., all of the content to be served to the users is waiting at the BS. 
Once the content has been served by the BS to a user, it is stored in the user's media player buffer, from which every epoch the media player either reads one {\em frame} or none. For each user $i$, the time of consumption of a frame is denoted by the stochastic process $F_i(t) \in \{0,1\}$. Here $F_i(t)=1$ means that the media player at user $i$ consumes one frame during epoch $t$. This process is stationary and ergodic with $\EX[F_i(t)]=p_i \in [0,1]$. Let $D_i^f$ denote the amount of data (in bits) in frame  $f \in \{1, 2, \ldots\}$ of the content streamed to user $i$. For each $i$, $\{D_i^f: f\ge 1\}$ is a stationary and ergodic process.
So, the amount of data required by the media player of user $i$ at epoch $t$ is 
$F_i(t) D_i^{\sum_{\tau=1}^t F_i(\tau)}$, where $D_i^{\sum_{\tau=1}^t F_i(\tau)}:=D_i^f$ for 
$f=\sum_{\tau=1}^t F_i(\tau)$.

Let $Q_i(t)$ be the occupancy (in bits) of the media player buffer of user $i$ at the end of epoch $t-1$ and the amount of content (in bits) delivered to user $i$ by the BS  in epoch $t$ be  $S_i(t)$. 
As the media player consumes either one frame or none at each epoch, the evolution of
the buffer at user $i$ is given by 
\begin{align}
Q_i&(t+1) = Q_i(t) + S_i(t) - F_i(t) D_i^{\sum_{\tau=1}^t F_i(\tau)} \cdot 
              \mbf{1}(F_i(t) D_i^{\sum_{\tau=1}^t F_i(\tau)}\le Q_i(t) + S_i(t)). \nonumber
\end{align}

We say that the media player at user $i$ has {\em paused} at time $t$ if 
$$\mbf{1}(F_i(t) D_i^{\sum_{\tau=1}^t F_i(\tau)} > Q_i(t) + S_i(t)),$$
i.e., the media player attempted to play the $\sum_{\tau=1}^t F_i(\tau)$th frame, but there was not enough data in the buffer. 

We define a resource allocation policy $a$ to be a sequence of maps $\{a^{(t)}\}$ 
such that at each $t$, 
$\{S_i(t): i \in [n]\}=a^{(t)}\left(\{Q_i(\tau): i \in [n]\},\right.$ $\left.\mbf{H}(\tau):1\le\tau\le t\right)$. 
Let $\mc{A}$ be the class of all ergodic policies under which the time average of the system vector $\{Q_i(t),  S_i(t): i \in [n]\}$ has an almost sure limit in $\R_+\cup\{\infty\}$. For any ${a} \in \mc{A}$ we define the asymptotic {\em frequency of pause} for user $i$ as
$$\kappa^{a}_i = \lim_{T\to\infty} \frac{1}{T} \sum_{t=1}^T \mbf{1}(F_i(t) D_i^{\sum_{\tau=1}^t F_i(\tau)} > Q^{a}_i(t) + S^{a}_i(t)) \mbox{ a.s.},$$
where $S^{a}_i(t)$ and $Q^{a}_i(t)$ are the service and the buffer processes under policy $a \in \mc{A}$.

For each user $i$ there is a cost function $V_i: [0,1] \to \R_+$ which captures the user's dissatisfaction as a function of its frequency of pause. The asymptotic cost for user $i$ under policy $a\in \mc{A}$ is given by $V_i(\kappa^a_i)$. Thus, the total asymptotic cost of the $n$-user and $m$-channel system under policy $a$ is $V^{n,m}(a)=\sum_i V_i(\kappa_i^{a})$, where $\kappa_i^{a}$ may possibly depend on the channel statistics.

As our primary objective is to minimize the total user dissatisfaction due to pause, we find an allocation $a\in\mc{A}$ which minimizes the total asymptotic average cost: 
$$\arg \min_{a \in \mc{A}} V^{n,m}(a).$$

In this paper, we use the notations $O(\cdot)$, $o(\cdot)$ and $\Theta(\cdot)$ with their standard meaning \cite{CormenLRS2009}.

\subsection{Practically relevant cost function}
\label{sec:costfunction}
Standard resource allocation problems in wireless networks involve either a minimization of a convex function or a maximization of a concave function. A traditional choice of  cost function along this line would turn the above problem into a convex problem and thus, would offer more tractability. Unfortunately, in this case,  such a choice would be impractical. For choosing the right  cost functions, let us relate to our own experience during multimedia streaming.

By definition, $0\le \kappa_i \le p_i$, because  frequency of pause cannot be more than the frame rate. To understand the nature of the functions, it is better to first look at the two extremes: $\kappa_i=0$ and $\kappa_i=p_i$. Naturally, we must have $V_i(0)=0$ and  $V_i(p_i)>0$ for all $i$. It is also obvious that the cost functions $\{V_i\}$ must be non-decreasing to capture increased dissatisfaction at an increased frequency of pause.  Near $\kappa_i=p_i$, where almost every frame is paused, a slight decrease in $\kappa_i$ would have almost no impact on user's dissatisfaction, which is at saturation. On the other hand, near $\kappa_i=0$, where the streaming experience is smooth, a slight increase in the frequency of pause would annoy the user significantly. This implies that a natural choice for $\{V_i\}$ are monotone increasing functions whose derivatives are non-increasing. Thus, the class of monotone increasing {\em concave} functions is the right choice for cost.

\subsection{Assumptions}
So far, in describing the system model and the objective, we have made some generic assumptions on the dynamics of the media player buffer and the fading process. For analytical tractability and simplicity of exposition, we introduce some structural assumptions.

The following assumption is motivated by the observations made in Sec.~\ref{sec:costfunction} and by analytical tractability.

{ \bf A1:} For each $i$, $V_i$ is a non-decreasing differentiable concave function with $V_i(0)=0$, the derivative at $0$ bounded by $G$, and $V_i(p_i) = V \cdot p_i$ for some positive constant $V$.

{\color{black}Following the existing literature on resource allocation \cite{ZhangJBC2014, FangZCL2016, HouP2015,bodas14,SinghK2015}, we assume that for any $i\in[n]$, $j \in [m]$ and $t$, $h_{i,j}(s)$ are the same  for all $s \in \{(t-1)\mc{E}+1, \ldots t\mc{E}\}$ and are known to the BS at the beginning of epoch $t$.}
{\color{black}Also, as $h_{i,j}(s)$ take finite values, without loss of generality, we normalize all data quantities, including frame size and $h_{i,j}(s)$, by the maximum possible value that $h_{i,j}(s)$ can take.}

{\bf A2:} For $i \in [n]$ and $j \in [m]$, $h_{i,j}(s)$ are the same for all $s \in [(t-1) \mc{E}+1, t \mc{E}]$ and is denoted by $h_{i,j}(t)$. For each $i$ and $j$, $\{h_{i,j}(t): t \in \Z_+\}$ are i.i.d. and $\bar{h}_{i,j}:=\PR(h_{i,j}(t)=1)\ge\bar{h}$ for some $\bar{h}>0$. Also, for each $t$ and $i$, $\{h_{i,j}(t): 1 \le j \le m\}$ are i.i.d. 

{\color{black}This assumption is well justified for low mobility scenarios where an epoch (i.e., an OFDMA frame) is comparable to the channel coherence time. At higher mobility, the assumption is well justified if scheduling epoch is chosen to be an OFDMA sub-frame or a few OFDMA slots.}

{\color{black} When $\mbf{H}(t)$ is not known at the transmitter, the performance upper bound in Theorem~\ref{thm:benchmark} has a natural extension. It can be shown that \ConcMin~followed by a random scheduler achieves that benchmark if fading statistics are the same across all channels. We omit this result, whose analysis is very similar to that of the results presented here, in the interest of space.}

{\color{black} All other analytical works so far assume that the frames are consumed periodically and are of the same size. As we discuss in Sec.~\ref{sec:intro}, this is not the case in practice. We take a step closer to practice by presenting analytical guarantees for the following more general stochastic multimedia frame dynamics.}

{ \bf A3:} For each $i$, $F_i(t) \in \{0, 1\}$ is stationary and ergodic with $\PR(F_i(t)=1)=p_i$, where $p_i$ is of the form $\frac{z_i}{Z}$ for all $i$. Here $Z$ is an integer independent of the system size and $z_i \in [Z]$ for all $i$. For some $\BlockVariable \in \Z_+$, $D_i^f = \BlockVariable \mc{E}$ for all $i$ and $f$.

The GoP structure and the frame rates are encoded in the header of the stream at the application layer. The MAC scheduler at the BS does not have access to these end-to-end application layer parameters. These parameters are generally used by the media player for decoding and playing the stream. But based on certain metadata shared by the higher network layers or the user equipment, the BS may be able to estimate the frame rate and the GoP structure.  In terms of the mathematical model in Sec.~\ref{sec:model} and the above assumptions, these parameters (statistics) are equivalent to $\{p_i\}$.  We study resource allocation in both scenarios: the BS knows and does not know $\{p_i\}$ a priori.

It is apparent that the cost-minimization problem posed here is quite  different from traditional utility optimization problems in communication networks, 
which are generally solved via novel adaptations of convex algorithms, e.g., dual gradient descent (a.k.a. drift plus penalty method) \cite{Neely2010}, heavy ball method \cite{LiuESB2016}, alternating direction method of multipliers \cite{LiuESB2016}. 
Our cost-minimization problem involves minimization of a differentiable concave cost, and hence is a non-convex problem. 
Moreover, the input variables of the cost functions are not data rates, rather frequencies of pause. 
It is not clear how to write the resource constraints directly in terms of frequencies of pause so that we can obtain a suitable static problem \cite{Neely2010}. Hence, the widely used network optimization techniques  cannot be applied here. 

\subsection{A benchmark}
\label{sec:benchmark}

To analytically compare the performance of our proposed resource allocation policies, a benchmark is needed. The following theorem provides a universal benchmark for all ergodic allocation schemes.
\begin{theorem}
\label{thm:benchmark}
Under assumptions {\bf A1-A3}, the cost of any ergodic policy is lower bounded by
\begin{align}
\bar{V}^{(n,m)}=\min_{\{0 \le \alpha_i \le 1\}} \sum_{i=1}^n V_i(\max(p_i-\alpha_i,0)) \ \mbox{s.t.} \sum_i \alpha_i \le \frac{m}{\BlockVariable}.
\label{eq:benchmark}
\end{align} 
\end{theorem}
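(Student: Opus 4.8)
The plan is to show that the realized pause frequencies $\{\kappa^a_i\}$ of \emph{any} ergodic policy $a\in\mc{A}$ arise as a feasible point of the optimization defining $\bar{V}^{(n,m)}$. Since $\bar V^{(n,m)}$ is the \emph{minimum} of $\sum_i V_i(\max(p_i-\alpha_i,0))$ over all feasible $\{\alpha_i\}$, it suffices to exhibit, for the given policy, a choice of $\{\alpha_i\}$ with $0\le\alpha_i\le 1$ and $\sum_i\alpha_i\le m/\BlockVariable$ such that $\kappa^a_i=\max(p_i-\alpha_i,0)$. The natural candidate is $\alpha_i:=p_i-\kappa^a_i$. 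Because a user can pause at most as often as it attempts to consume a frame, $0\le\kappa^a_i\le p_i$, so $\alpha_i\in[0,p_i]\subseteq[0,1]$ and $\max(p_i-\alpha_i,0)=\kappa^a_i$ automatically. Thus the whole argument reduces to verifying the single resource constraint $\sum_i(p_i-\kappa^a_i)\le m/\BlockVariable$. Note that neither monotonicity nor concavity of $V_i$ is used for the lower bound itself; these properties enter only when one later argues that the benchmark is (almost) attainable.

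The first key step is a per-user service bound. Under {\bf A3} every successfully played frame removes exactly $\BlockVariable\mc{E}$ bits from the buffer, while a paused frame removes none (the indicator in the buffer recursion vanishes). Summing the buffer recursion therefore gives
\begin{align}
Q^a_i(T+1)=Q^a_i(1)+\sum_{t=1}^{T}S^a_i(t)-\BlockVariable\mc{E}\,N_i(T),\nonumber
\end{align}
where $N_i(T)$ is the number of frames user $i$ actually plays up to epoch $T$. Since $N_i(T)$ equals the number of consumption attempts $\sum_{t=1}^T F_i(t)$ minus the number of pauses, ergodicity of $\{F_i(t)\}$ (giving $\tfrac1T\sum_t F_i(t)\to p_i$) together with the definition of $\kappa^a_i$ yields $N_i(T)/T\to p_i-\kappa^a_i$ a.s. Dividing the displayed identity by $T$, using $Q^a_i(T+1)\ge0$ and $Q^a_i(1)<\infty$, and letting $T\to\infty$ gives the average-service lower bound $\bar S_i:=\lim_T\tfrac1T\sum_t S^a_i(t)\ge \BlockVariable\mc{E}\,(p_i-\kappa^a_i)$, where the limit exists a.s. by the definition of $\mc{A}$.

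The second key step is the capacity constraint. Under {\bf A2} each channel is constant over an epoch and takes values in $\{0,1\}$, so any single (channel, slot) pair delivers at most one bit; with $m$ channels and $\mc{E}$ slots per epoch this caps the total served bits by $\sum_i S^a_i(t)\le m\mc{E}$ for every $t$, and hence $\sum_i\bar S_i\le m\mc{E}$ after time-averaging. Chaining this with the per-user bound,
\begin{align}
\BlockVariable\mc{E}\sum_{i}(p_i-\kappa^a_i)\le\sum_i\bar S_i\le m\mc{E},\nonumber
\end{align}
and dividing by $\BlockVariable\mc{E}$ yields exactly $\sum_i\alpha_i=\sum_i(p_i-\kappa^a_i)\le m/\BlockVariable$. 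Therefore $\{\alpha_i\}$ is feasible, and $V^{n,m}(a)=\sum_i V_i(\kappa^a_i)=\sum_i V_i(\max(p_i-\alpha_i,0))\ge\bar V^{(n,m)}$, as claimed.

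I expect the main obstacle to be the rigorous frame-to-bit accounting and the justification of the almost-sure limits, rather than any inequality manipulation. In particular, one must argue carefully that the data actually drained from the buffer up to time $T$ is precisely $\BlockVariable\mc{E}$ times the count of \emph{played} (not merely attempted) frames, and that the residual terms $Q^a_i(1)/T$ and any boundary effects vanish; this relies on $Q^a_i(1)$ being a.s. finite and on the ergodicity in {\bf A2}--{\bf A3}, which guarantee that the empirical frame-consumption rate and the empirical pause rate converge a.s. to $p_i$ and $\kappa^a_i$ respectively. Once these convergences and buffer non-negativity are in place, the remaining steps are purely arithmetic.
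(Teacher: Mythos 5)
Your proof is correct, and while it shares the paper's overall skeleton (exhibit the policy's operating point as feasible for \eqref{eq:benchmark}, then invoke minimality), the key sub-step is argued by a genuinely different and more elementary route. The paper routes everything through Lemma~\ref{lem:benchmark}, which establishes the \emph{exact} identity $\kappa_i^a=\max(p_i-\bar{s}_i^a/(\BlockVariable\mc{E}),0)$ by writing the per-epoch pause probability as $\EX[X_i(t+1)]-\EX[X_i(t)+S_i(t)/(\BlockVariable\mc{E})-F_i(t)]$ and invoking stationarity ($\EX[X_i(t+1)]=\EX[X_i(t)]$), with a separate drift argument for the underloaded case $p_i\le\bar{s}_i^a/(\BlockVariable\mc{E})$; it then sets $\alpha_i=\bar{s}_i^*/(\BlockVariable\mc{E})$ and applies the capacity bound $\sum_i\bar{s}_i^*\le m\mc{E}$. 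You instead telescope the buffer recursion over $[1,T]$ and use only $Q_i^a(T+1)\ge 0$ to get the one-sided bound $\bar{S}_i\ge\BlockVariable\mc{E}(p_i-\kappa_i^a)$, which is all the lower bound needs; you then choose $\alpha_i=p_i-\kappa_i^a$. What your route buys: it avoids the stationarity argument and the stable/unstable case split entirely (the paper's treatment of the unstable case is only sketched), and your choice of $\alpha_i$ lands in $[0,p_i]$ automatically, so the box constraint $\alpha_i\le 1$ is satisfied without the implicit truncation the paper's choice would require when $\bar{s}_i^*/(\BlockVariable\mc{E})>1$. What the paper's route buys: the exact equality in Lemma~\ref{lem:benchmark} is reused later for the achievability direction (Theorem~\ref{thm:fading}), so the authors need the stronger statement anyway, whereas your argument would have to be supplemented there. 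Your closing caveats (finiteness of $Q_i^a(1)/T$, frame-to-bit accounting under {\bf A3}) are exactly the right points to check and all go through.
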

This bound is applicable for any $\bar{h}>0$ in assumption {\bf A2}, and thus is independent of the fading statistics.
Later, we show comparison of the cost under our proposed policy with this lower bound.  The above theorem follows from the following lemma.
\begin{lemma}
\label{lem:benchmark}
Under assumptions {\bf A1-A3}, for any ergodic policy $a \in \mc{A}$, if the ergodic service rate to user $i$ is $\bar{s}_i^{a}:=\lim_{\tau\to \infty} \frac{1}{\tau}\sum_{t=1}^{\tau} S_i^{a}(t)$, then $\kappa_i^a = \max(p_i - \frac{\bar{s}^a_i}{\BlockVariable \mc{E}}, 0)$.
\end{lemma}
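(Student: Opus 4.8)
The plan is to derive a sample-path flow-conservation identity for the buffer, pass to long-run averages, and then split into two regimes according to whether the normalized buffer occupancy has a strictly positive limit. The starting point is Assumption \textbf{A3}, which forces every frame to carry exactly $\BlockVariable\mc{E}$ bits. Writing $N_i(T)=\sum_{t=1}^{T}F_i(t)$ for the number of frames the player attempts and $P_i(T)=\sum_{t=1}^{T}\mbf{1}(F_i(t)\BlockVariable\mc{E} > Q_i^{a}(t)+S_i^{a}(t))$ for the number of pauses up to epoch $T$, the departure at each epoch is $\BlockVariable\mc{E}$ on a successful play and $0$ otherwise, so the cumulative data consumed is exactly $\BlockVariable\mc{E}\,(N_i(T)-P_i(T))$. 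Telescoping the buffer recursion then gives the exact identity
\begin{align}
Q_i^{a}(T+1) = Q_i^{a}(1) + \sum_{t=1}^{T} S_i^{a}(t) - \BlockVariable\mc{E}\bigl(N_i(T)-P_i(T)\bigr). \nonumber
\end{align}

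Next I would divide by $T$ and let $T\to\infty$. Ergodicity of $\{F_i(t)\}$ gives $N_i(T)/T\to p_i$, membership $a\in\mc{A}$ gives $\frac1T\sum_{t\le T}S_i^{a}(t)\to\bar{s}_i^{a}$, and $P_i(T)/T\to\kappa_i^{a}$ by definition, while the initial term vanishes. Hence $Q_i^{a}(T)/T$ converges to $L:=\bar{s}_i^{a}-\BlockVariable\mc{E}(p_i-\kappa_i^{a})$, and since $Q_i^{a}(\cdot)\ge 0$ we get $L\ge 0$; this already yields the lower bound $\kappa_i^{a}\ge\max(p_i-\bar{s}_i^{a}/(\BlockVariable\mc{E}),0)$.

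The matching upper bound is the crux, and I would obtain it through a dichotomy on $L$. A pause at epoch $t$ requires $Q_i^{a}(t)+S_i^{a}(t)<\BlockVariable\mc{E}$, and in particular $Q_i^{a}(t)<\BlockVariable\mc{E}$. If $L>0$, then $Q_i^{a}(T)\to\infty$, so the buffer stays above $\BlockVariable\mc{E}$ from some epoch onward, no further pause can occur, and therefore $\kappa_i^{a}=0$; substituting back into the expression for $L$ forces $L=\bar{s}_i^{a}-\BlockVariable\mc{E}p_i>0$, i.e. $\bar{s}_i^{a}>\BlockVariable\mc{E}p_i$, so that $\max(p_i-\bar{s}_i^{a}/(\BlockVariable\mc{E}),0)=0=\kappa_i^{a}$. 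If instead $L=0$, then $\kappa_i^{a}=p_i-\bar{s}_i^{a}/(\BlockVariable\mc{E})$, which, being nonnegative, equals $\max(p_i-\bar{s}_i^{a}/(\BlockVariable\mc{E}),0)$. In either case the claimed equality holds.

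I expect the main obstacle to be this last step: rigorously converting the averaged statement $L>0$ into the pointwise divergence $Q_i^{a}(T)\to\infty$ and hence the cessation of pauses, together with the bookkeeping needed to justify that all four limits above exist simultaneously for $a\in\mc{A}$, so that the conservation identity may be divided by $T$ term by term. The remaining manipulations are elementary.
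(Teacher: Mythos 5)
Your proof is correct, and it reaches the result by a genuinely different execution of the same underlying conservation idea. The paper works one epoch at a time: it normalizes the buffer to $X_i(t)=Q_i(t)/(\BlockVariable\mc{E})$, notes that the integrality of $X_i$, $S_i/(\BlockVariable\mc{E})$ and $F_i$ makes the pause indicator equal to the increment $X_i(t+1)-\bigl(X_i(t)+S_i(t)/(\BlockVariable\mc{E})-F_i(t)\bigr)$ itself, and then cancels $\EX[X_i(t+1)]-\EX[X_i(t)]$ by stationarity to read off the per-epoch pause probability as $p_i-\bar{s}_i^a/(\BlockVariable\mc{E})$; the transient case $p_i\le\bar{s}_i^a/(\BlockVariable\mc{E})$ is dispatched separately by a drift argument. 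You instead telescope the same recursion along a sample path, divide by $T$, and run a dichotomy on the resulting drift $L$. The two identities are literally the Ces\`aro sum of one another, but your route buys two things: it avoids having to justify that the buffer process is stationary (the paper's step $\EX[X_i(t+1)]=\EX[X_i(t)]$ is only exact in steady state, and otherwise needs a Ces\`aro or convergence-to-stationarity argument, which your telescoping supplies automatically), and it treats the stable and transient regimes uniformly through the single quantity $L$, with the transient case handled cleanly via $Q_i^a(T)\to\infty$ rather than the paper's terser expectation argument. The obstacle you flag at the end is not a real one: $Q_i^a(T)/T\to L>0$ almost surely immediately forces $Q_i^a(T)\to\infty$, and the four limits all exist on a common almost-sure event by ergodicity of $\{F_i(t)\}$, the definition of $\mc{A}$, and the paper's definition of $\kappa_i^a$ as an almost-sure limit.
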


This expression for \(\kappa_i^a\) is obtained by establishing a simple relation between the probability of buffering pause and the expected change in the buffer state at epoch \(t\).
We can see that setting \(\frac{\bar{s}_i^a}{\BlockVariable\mc{E}} = \alpha_i^*\) achieves the lower bound in Thm. \ref{thm:benchmark}, where \(\{\alpha_i^*\}\) are the optimal solutions of \eqref{eq:benchmark}. 
This bound might be achievable in the absence of fading or when the system is underloaded.
However, for an overloaded system, i.e., when \(\sum_i p_i > \frac{m}{\BlockVariable}\), especially in the presence of fading, it is not possible to achieve $\frac{\bar{s}_i^a}{\BlockVariable\mc{E}} = \alpha_i^*$ for all \(i\) simultaneously, since this would otherwise require that \(\sum_i\frac{\bar{s}_i^a}{\BlockVariable\mc{E}} = \frac{m}{\BlockVariable}\), i.e., the total ergodic service rate should not be impacted by fading at all.
Hence, for fading channels, a gap with the benchmark is expected.

\section{Known $\{p_i\}$: non-convexity and joint admission-allocation}
\label{sec:knownpi}

We start with the  case when $\{p_i\}$ are known at the BS a priori, since it is the simpler case which helps  to separate the complexity in cost minimization from the additional challenges due to the lack of knowledge of $\{p_i\}$. 

As discussed in Sec.~\ref{sec:model}, the lack of a convex structure does not allow us to use the traditional network optimization techniques \cite{Neely2010}.
We take an indirect approach which harnesses a combinatorial structure inside the continuous non-convex problem and gives an optimal joint admission control and channel allocation scheme. 

Our approach is motivated by the following simple observation based on Thm.~\ref{thm:benchmark} and Lem.~\ref{lem:benchmark}. If we can find  $\{\alpha^*_i\}$ that solves the optimization problem \eqref{eq:benchmark} and can obtain an allocation scheme $\bar{a}$ such that
$\frac{s^{\bar{a}}_i}{\BlockVariable \mc{E}}=\alpha^*_i$, then $\bar{a}$ is an optimum resource allocation scheme.
Towards this, we develop a polynomial time algorithm \ConcMin~which solves \eqref{eq:benchmark} (Sec.~\ref{sec:ConcMin}) and design a polynomial time channel allocation algorithm \AllocateChannels~under which $\alpha_i^*-\frac{s^{\bar{a}}_i}{\BlockVariable \mc{E}} \le \theta^{-m}$ for $\theta>1$ (Sec.~\ref{sec:AllocateChannels}). 

\begin{algorithm}
\caption{\ConcMin} \label{alg:concMin}
{\bf Input:} \(\{V_i\}, \{p_i\}, c=\frac{m}{\BlockVariable}\)

{\bf Output:} \(\{\bar{\alpha}_i\}\)

\begin{algorithmic}[1]
\IF{\(\sum_{i \in [n]} p_i \le c\)} \label{alg:lin:plec}
    \STATE \(\bar{\alpha}_i \leftarrow p_i\) for all \(i \in [n]\)
\ELSE
    \FORALL{\(k \in [n]\)} \label{alg:lin:findfracuser}
        \STATE \(L_k \leftarrow 
                    \SSum([n] \setminus k, c)\)
                    \label{alg:lin:findlk}
        \STATE \(L \leftarrow V \cdot \left(\sum\limits_{i \in [n]\setminus k} p_i -\sum\limits_{i \in L_k} p_i\right) +
                            V_k(p_k + \sum\limits_{i \in L_k} p_i - c)\)
        \STATE \COMMENT{\(L\) is cost if 
                            \(\alpha_i=p_i\) for \(i \in L_k\)}
        \STATE \(R_k \leftarrow 
                    \SSum([n] \setminus k, \sum\limits_{i \in [n]} p_i -c)\)
                    \label{alg:lin:findrk}
        \STATE \(R \leftarrow V \cdot \left(\sum\limits_{i \in R_k} p_i\right) + 
                            V_k(\sum\limits_{i \in [n]} p_i -\sum\limits_{i \in R_k} p_i-c)\)
        \STATE \COMMENT{\(R\) is cost if 
                            \(\alpha_i=0\) for \(i \in R_k\)}
        \IF{\(L \le R\)}
            \STATE \(\alpha_i^k \leftarrow p_i\) for all \(i \in L_k\)
            \STATE \(\alpha_k^k \leftarrow c - \sum_{i \in L_k} p_i\)
            \STATE \(\alpha_i^k \leftarrow 0\) for all \(i \notin L_k \cup \{k\}\)
            \STATE \(J_k \leftarrow L\)
        \ELSE
            \STATE \(\alpha_i^k \leftarrow 0\) for all \(i \in R_k\)
            \STATE \(\alpha_k^k \leftarrow c - \sum_{i \in [n]\setminus k} p_i + 
                                        \sum_{i \in R_k} p_i\)
            \STATE \(\alpha_i^k \leftarrow p_i\) for all \(i \notin R_k \cup \{k\}\)
            \STATE \(J_k \leftarrow R\)
        \ENDIF \label{alg:lin:endofBestSkstar}
    \ENDFOR
    \STATE \(k^* \leftarrow \arg \min_k J_k\) \label{alg:lin:findkstar}
    \STATE \(\bar{\alpha}_i \leftarrow \alpha_i^{k^*}\) for all \(i \in [n]\) \label{alg:lin:alphabar}
\ENDIF
\end{algorithmic}

\end{algorithm}

\subsection{\ConcMin~for solving \eqref{eq:benchmark}}\label{sec:ConcMin}
\ConcMin~(Alg.~\ref{alg:concMin}) is proposed to solve the optimization problem in Thm.~\ref{thm:benchmark}. In the case of an under-loaded (resource rich) network, i.e., $\sum_{i\in[n]} p_i \le \frac{m}{\BlockVariable}$, $\alpha_i=p_i$ for all $i$ is the obvious optimal solution (Step \ref{alg:lin:plec}).
The main challenge lies in the overloaded or resource constrained network, i.e., $\sum_{i\in[n]} p_i > \frac{m}{\BlockVariable}$. In this case, \ConcMin\ searches over a collection of extreme points of the constraint set and picks one with the minimum cost. Here the extreme points are the set of tuples $\{\alpha_i: i \in [n]\}$ such that for some $\mc{S}\subset [n]$ and $|\mc{S}|=n-1$, $\alpha_i \in \{0, p_i\}$ for all $i \in \mc{S}$. This search is carried out in Steps \ref{alg:lin:findfracuser}-\ref{alg:lin:alphabar}.

To find the best extreme point, for each $k \in [n]$, \ConcMin~ searches for the subset $\mc{S}_k^* \subset [n]\setminus k$ and the best $\alpha_k \in (0, p_k)$  so that if $\alpha_i = p_i$ for $i \in \mc{S}_k^*$ and $\alpha_i=0$  for  $i \not\in \mc{S}_k^*\cup\{k\}$, then the cost is minimized ({\bf for} loop in Step \ref{alg:lin:findfracuser}). Finally, it picks the best $k$ and the corresponding $\mc{S}_k^*$ by comparing cost of $\{\mc{S}_k^*: k\in [n]\}$ (Steps \ref{alg:lin:findkstar}-\ref{alg:lin:alphabar}). 

The search for $\mc{S}_k^*$ is a combinatorial subset selection problem. \ConcMin~finds  $L_k$ which maximizes $\sum_{i \in S\setminus k} p_i$ and $R_k$ which minimizes $\sum_{i \in S\setminus k} p_i$ subject to $\sum_{i \in S\setminus k} p_i > 1 - p_k$. The one with lower cost among them is picked as $S^*_k$.
Finding $L_k$ is related to the well known subset sum problem (Step \ref{alg:lin:findlk}) \cite{KellererPP2004}. It turns out that the problem of finding $R_k$ can be written in an alternate form, which is also a subset sum problem with different parameters (Step \ref{alg:lin:findrk}).

We use the \SSum~routine to solve the subset sum problem. \(\SSum(W,c)\), for some \(W \subseteq [n]\), returns the set \(S \subseteq W\) 
so that \(\sum_{i \in S}p_i\) is maximized subject to \(\sum_{i \in S}p_i \le c\).
For \SSum~the standard dynamic programming based algorithm \cite{KellererPP2004} can be used. Though  that algorithm does not solve any general subset sum problem in polynomial time, in our case it does. This is because, for our problem, across all instances the sack sizes are at most $Z\cdot\max(m,n)$.  Further, as subset sum  is a special case of the knapsack problem and the weights $\{p_i\} \subset \{\frac{z}{Z}: z \in [Z]\}$ for $Z=O(1)$, there exists an accurate algorithm with $O(n)$ complexity \cite{KellererPP2004}.

We have the following guarantee on the computational complexity and the correctness of \ConcMin.

\begin{theorem}
\label{thm:concmin}
In $O(n^2 )$ steps \ConcMin~obtains an optimal solution for the optimization problem in Thm.~\ref{thm:benchmark}, i.e., $\bar{\alpha}_i=\alpha_i^*$ for all $i \in [n]$.
\end{theorem}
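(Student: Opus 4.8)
The plan is to recognize \eqref{eq:benchmark} as the minimization of a concave function over a polytope, for which an optimum is attained at an extreme point, and then to show that \ConcMin~enumerates a cost-dominating subfamily of these extreme points efficiently. First I would argue that we may restrict attention to $\alpha_i\in[0,p_i]$: since $V_i(\max(p_i-\alpha_i,0))$ is constant for $\alpha_i\ge p_i$, any feasible point can be projected to $\alpha_i\le p_i$ without increasing the cost while only relaxing the budget constraint. On the reduced box $\prod_i[0,p_i]$ the $\max$ disappears and the objective becomes $\sum_i V_i(p_i-\alpha_i)$; since each $V_i$ is concave (A1) and $\alpha_i\mapsto p_i-\alpha_i$ is affine, each summand, and hence the objective, is concave. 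The underloaded case $\sum_i p_i\le c$ is immediate ($\alpha_i=p_i$ drives every summand to $V_i(0)=0$), so I would focus on $\sum_i p_i>c$.

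Since a concave function on the bounded polytope $P=\{0\le\alpha_i\le p_i,\ \sum_i\alpha_i\le c\}$ attains its minimum at a vertex, the next step is to characterize the vertices of $P$. A standard basic-feasible-solution count shows that at a vertex the number of coordinates strictly inside $(0,p_i)$ is at most the number of slack non-box constraints; as the only non-box constraint is the budget, every vertex has at most one fractional coordinate, with all others in $\{0,p_i\}$ and the budget tight whenever a fractional coordinate is present. This is exactly the family of points \ConcMin~sweeps through: for each candidate fractional index $k$, the remaining users are split into a set $S\subseteq[n]\setminus k$ with $\alpha_i=p_i$ and its complement with $\alpha_i=0$, and $\alpha_k=c-\sum_{i\in S}p_i$ is pinned by the budget.

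The crux is to show that for each fixed $k$ it suffices to examine only the two subset sums $L_k$ and $R_k$ rather than all $2^{n-1}$ choices of $S$. Here I would invoke assumption A1 in the form $V_i(p_i)=Vp_i$: writing $\sigma=\sum_{i\in S}p_i$, the contribution of all users other than $k$ collapses to $V\big(\sum_{i\ne k}p_i-\sigma\big)$, so the total cost depends on $S$ only through $\sigma$ and equals $f(\sigma)=V\big(\sum_{i\ne k}p_i-\sigma\big)+V_k(p_k-c+\sigma)$. Because $V_k$ is concave, $f$ is concave in $\sigma$, and a concave function evaluated on any finite set of feasible points (the admissible subset sums lie in $[c-p_k,c]$ so that $\alpha_k\in[0,p_k]$) is minimized at the extreme feasible points by the chord inequality. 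The largest admissible $\sigma$ is the maximum subset sum not exceeding $c$, namely $L_k=\SSum([n]\setminus k,c)$; the smallest admissible $\sigma$ is obtained, by complementation, as the maximum subset sum of the excluded users not exceeding $\sum_i p_i-c$, namely $R_k=\SSum([n]\setminus k,\sum_i p_i-c)$. Thus $J_k=\min(L,R)$ is the best cost achievable with fractional index $k$, minimizing $J_k$ over $k$ returns a global optimum, and a degenerate vertex with no fractional coordinate is captured as the boundary case $\alpha_k\in\{0,p_k\}$.

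Finally, for the complexity, each of the $n$ iterations solves two subset-sum instances, and by the special structure $p_i\in\{z/Z:z\in[Z]\}$ with $Z=O(1)$ and sack size $O(\max(m,n))$ these are solvable in $O(n)$ time as noted, giving $O(n^2)$ overall. I expect the main obstacle to be the careful handling of feasibility and of degenerate vertices in the subset-sum reduction: one must verify that the extreme admissible subset sums underlying $L_k$ and $R_k$ actually correspond to points of $P$ (i.e.\ that the pinned $\alpha_k$ lands in $[0,p_k]$), and that the concavity-on-discrete-points argument correctly certifies optimality even though the feasible values of $\sigma$ form a discrete, possibly gappy set rather than an interval.
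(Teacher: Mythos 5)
Your proposal is correct and follows essentially the same route as the paper's proof: restrict to the box $\prod_i[0,p_i]$, use the basic-feasible-solution count to show every vertex has at most one fractional coordinate, reduce the per-$k$ search to a concave function of the single scalar $\sigma=\sum_{i\in S}p_i$ (using $V_i(p_i)=Vp_i$), and conclude that only the two extreme achievable subset sums $L_k$ and $R_k$ need be examined, with the same complexity accounting. The feasibility and discrete-extreme-point caveats you flag are exactly the points the paper's argument also relies on (and handles identically via the complementation defining $R_k$).
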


Interestingly,  the optimization problem in Thm.~\ref{thm:benchmark} involves continuous variables with no apparent integer or combinatorial constraints. However, the particular non-convex structure of the problem leads to an optimal combinatorial algorithm. 

For simplicity of presentation we restrict to \(\BlockVariable=1\) in the following, though all the results directly extend to \(\BlockVariable \ge 1\).

\subsection{Near-optimal channel allocation}\label{sec:AllocateChannels}
Our near-optimal channel allocation scheme \AllocateChannels~is preceded by a sub-routine \SelectUsers~which generates a  list of \(m\) random users. It is ensured that the list does not have a user repeated more than twice. Otherwise, the system would fail to harness the diversity in the fading processes experienced by different users. 
It is also ensured that each user $i$ is picked with probability $\alpha_i^*.\ 
$\SelectUsers\ proceeds as follows.

We maintain a unit length interval for each of the \(m\) ``slots'' we are going to fill with users.
Fill in all the unit length intervals in sequence, starting with \(\alpha_1\) for user \(1\), all the way up to \(\alpha_n\) for user \(n\).
If any \(\alpha_i\) overflows the interval of a particular slot, fill in the remainder of that \(\alpha_i\) in the next slot.
This procedure is pictorially represented in Fig.~\ref{fig:SelectUsers}.
In an overloaded network, we have \(\sum_i \alpha_i^* = m\), which allows us to fill in all the slots perfectly.
On the other hand, in an underloaded network, we have \(\alpha_i^* = p_i\) for all \(i\) with \(\sum_i\alpha_i^* \le m\), and we can leave the last few slots of \SelectUsers\ vacant.
For each slot, we pick user \(i\) with probability equal to the amount filled in by \(\alpha_i^*\) for that slot.
It is easy to see that the expected number of slots allotted 
by \SelectUsers\ to user \(i\) is \(\alpha_i^*\), 
and the maximum number of slots allotted to any user is $2$.

\begin{figure}[h]
    \centering
    \includegraphics[width=0.9\linewidth]{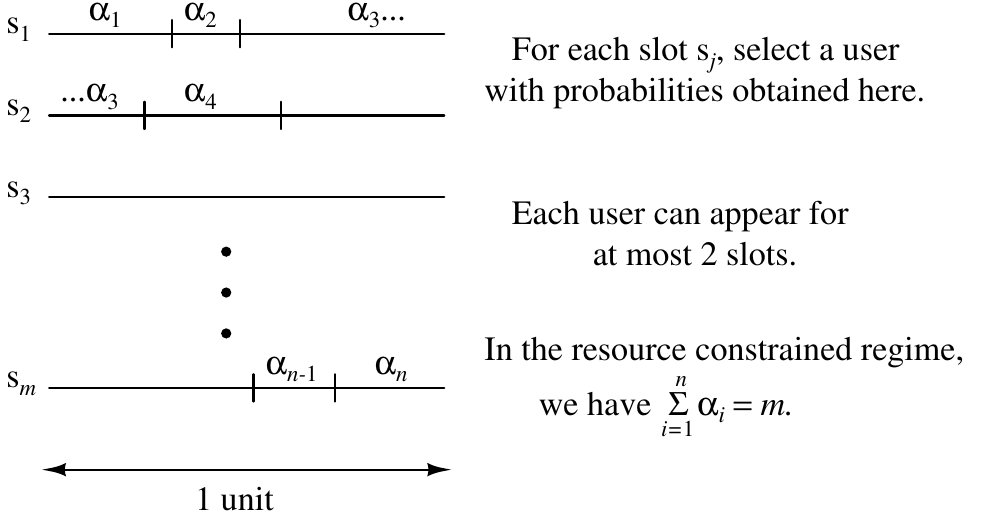}
    \caption{Selecting a set of \(m\) users.}
    \label{fig:SelectUsers}
\end{figure}

\begin{algorithm}[h]
\caption{\SelectUsers} \label{alg:SelectUsers}
{\bf Input:} \(\{\alpha_i\}_{i=1}^n\)

{\bf Output:} List of \(m\) (possibly repeated) users \((s_1, \ldots, s_m)\)

\begin{algorithmic}[1]
\STATE \(u \leftarrow 1\)
\FOR {\(j := 1\ \TO\ m\)}
    \STATE \(Q \leftarrow \emptyset\)
    \WHILE {\(\left(\sum\limits_{\beta:(\nu,\beta)\in Q}\beta\right) < 1\)}
        \IF {\(\alpha_u \le 1 - \left(\sum\limits_{\beta:(\nu,\beta)\in Q}\beta\right)\)}
            \STATE \(Q \leftarrow Q \cup \{(u, \alpha_u)\}\)
            \STATE \(u \leftarrow u + 1\)
        \ELSE
            \STATE \(Q \leftarrow Q \cup \{(u, 1 - \sum_{\beta:(\nu,\beta)\in Q}\beta)\}\)
            \STATE \(\alpha_u \leftarrow \alpha_u - (1 - \sum_{\beta:(\nu,\beta)\in Q}\beta)\)
        \ENDIF
    \ENDWHILE
    \STATE {// Now select \(\nu\) with probability \(\beta\)}
    \STATE {Pick \(Y_\nu \sim \exp(\beta)\) for all \((\nu, \beta) \in Q\) independently}
    \STATE \(s_j \leftarrow \arg\max\limits_{\nu:(\nu,\beta)\in Q} Y_\nu\)
\ENDFOR
\end{algorithmic}
\end{algorithm}

Once we have a list of \(m\) users, we create a bipartite graph between these users and the \(m\) channels, with an edge if and only if the {\color{black}channel rate $h_{i,j}(t)=1$ for user $i$ and channel $j$}. 
\AllocateChannels\ constructs this bipartite graph, finds a maximum matching, and then allocates the matched channels to the users. A maximum bipartite matching can be found using existing algorithms
 \cite{hopcroftkarp,CormenLRS2009}. 
 As no user is repeated more than twice in the list \((s_1, \ldots, s_m)\), a perfect matching is found with very high probability due to the diversity in the fading processes across different users.

\begin{algorithm}[h]
\caption{\AllocateChannels} \label{alg:AllocateChannels}
{\bf Pre-computation at time \(0\):} obtain \(\{\alpha_i^*\}\) from \ConcMin

\begin{algorithmic}[1]
\STATE {For each epoch \(t\):}
\STATE {Get channel state information \(\{h_{i,j}\}\) for all \(i \in [n]\) and \(j \in [m]\)}
\STATE \((s_1, \ldots, s_m) \leftarrow \SelectUsers(\{\alpha_i^*\})\)
\STATE {Construct a bipartite graph between the selected users \(\{s_u\}_{u=1}^m\) and channels \([m]\) as follows: there is an edge between \(s_u\) and \(j\) iff \(h_{s_u,j}=1\)}
\STATE {Find the subset of edges \(M\) which forms a maximum matching of the bipartite graph}
\STATE {For each \((s,j)\in M\), allocate channel \(j\) to user \(s\)}
\end{algorithmic}
\end{algorithm}

{\color{black}\AllocateChannels~allocates a channel to a user only if it is the best, i.e., it allocates channels to users only if  $h_{i,j}(t)=1$. Despite the conservative allocation, it is able to exploit the diversity across  channel-user pairs to allocate resources in an almost optimal fashion, as}
stated formally in the following theorem.
\begin{theorem}
\label{thm:fading}
Under assumptions {\bf A1-A3}, for sufficiently large \(m\), \AllocateChannels\ has a cost \(V^{n,m}(\AllocateChannels)\) that satisfies
\begin{align*}
V^{n,m}(\AllocateChannels) - \bar{V}^{n,m} \le \gamma^{-m},
\end{align*}
for some constant \(\gamma > 1\), for any \(n\).
The per epoch computational complexity of running \AllocateChannels\ is \(O(m n + m^{2.5})\). 
\end{theorem}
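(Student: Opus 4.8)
The plan is to reduce the cost gap to a single combinatorial quantity — the expected number of filled slots that the maximum matching fails to serve in one epoch — and then to show that this quantity decays exponentially in \(m\), uniformly in \(n\).

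First I would set up the reduction. By Lem.~\ref{lem:benchmark}, under \AllocateChannels\ we have \(\kappa_i=\max(p_i-\tfrac{\bar s_i}{\BlockVariable\mc{E}},0)\), where \(\tfrac{\bar s_i}{\BlockVariable\mc{E}}\) is the long‑run expected number of frames delivered to user \(i\) per epoch. With \(\BlockVariable=1\) one matched ON channel delivers exactly one frame, and \SelectUsers\ gives user \(i\) an expected \(\alpha_i^*\) slots; since the channel states and the selection are i.i.d.\ across epochs, \(\tfrac{\bar s_i}{\BlockVariable\mc{E}}\) equals the per‑epoch expected number of slots of user \(i\) that are matched, which is at most the expected number allotted, \(\alpha_i^*\). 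Hence \(\kappa_i\ge\max(p_i-\alpha_i^*,0)\), and because each \(V_i\) is concave with \(V_i(0)=0\) and \(V_i'(0)\le G\), it is \(G\)-Lipschitz, so
\[
V^{n,m}(\AllocateChannels)-\bar{V}^{n,m}\le G\sum_i\Big(\alpha_i^*-\tfrac{\bar s_i}{\BlockVariable\mc{E}}\Big)=G\,\EX[U],
\]
where \(U\) is the number of filled slots left unmatched in one epoch (using \(\sum_i\alpha_i^*=\EX[\#\text{filled}]\) and \(\sum_i\tfrac{\bar s_i}{\BlockVariable\mc{E}}=\EX[\#\text{matched}]\)). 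It therefore suffices to prove \(\EX[U]\le\gamma_0^{-m}\) for some \(\gamma_0>1\) independent of \(n\).

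Next I would bound \(\EX[U]\) by a matching argument. I condition on the slot list of \SelectUsers, so the bipartite graph has \(m\) left vertices (slots) and \(m\) right vertices (channels), with all randomness in the edges carried by the block‑fading states \(\{h_{i,j}\}\): these are independent across channels (and, as is standard for block fading, across users) and are ON with probability at least \(\bar{h}\). The key point for \(n\)-independence is that the ensuing union bound ranges over subsets of the \(m\) slots, \emph{not} over the \(n\) users. By the deficiency form of Hall's/K\"onig's theorem, \(U=\max_{S}\big(|S|-|N(S)|\big)\), whence
\[
\EX[U]=\sum_{k\ge1}\PR(U\ge k)\le\sum_{k\ge1}\sum_{s\ge k}\PR\big(\exists\,S:\ |S|=s,\ |N(S)|\le s-k\big).
\]
A set \(S\) of \(s\) slots contains \(d\ge\lceil s/2\rceil\) distinct users (no user is repeated more than twice), and \(|N(S)|\le s-k\) forces \(m-s+k\) channels to be OFF for all \(d\) of these users; by independence across channels this has probability at most \((1-\bar{h})^{(m-s+k)d}\) for a fixed configuration. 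Counting the choices of \(S\) and of the small neighborhood by binomials gives
\[
\PR(U\ge k)\le\sum_{s=k}^{m}\binom{m}{s}\binom{m}{s-k}(1-\bar{h})^{(m-s+k)\lceil s/2\rceil}.
\]

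The main obstacle is to verify that the exponential factor dominates the combinatorial prefactors for \emph{every} value of \(s\). I would split at, say, \(s=m/2\): for \(s\) bounded away from \(m\) the exponent is \(\Theta(ms)\) and easily beats the \(\le4^m\) prefactor once \(m\) is large, while for \(s\) near \(m\) the binomial prefactors are themselves only polynomial in \(m\) and the few required OFF channels must each avoid \(\Theta(m)\) users, again giving super‑exponential decay. This yields \(\PR(U\ge k)\le C\rho^{\,m}\) with \(\rho<1\) depending only on \(\bar{h}\); summing the resulting geometric series over \(k\) gives \(\EX[U]\le\gamma_0^{-m}\), and absorbing \(G\) and choosing \(m\) large yields the claimed \(\gamma^{-m}\). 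The underloaded case (\(\sum_i\alpha_i^*=\sum_i p_i\le m\)) is handled identically, with fewer filled slots and an even larger channel surplus. Finally, the per‑epoch complexity is immediate: acquiring \(\{h_{i,j}\}\) and the single pass of \SelectUsers\ over the \(n\) users cost \(O(mn)\), building the \(m\times m\) bipartite graph costs \(O(m^2)\), and a maximum bipartite matching by Hopcroft--Karp costs \(O(m^2\sqrt{m})=O(m^{2.5})\), while \ConcMin\ runs only once at time \(0\); this gives the stated \(O(mn+m^{2.5})\).
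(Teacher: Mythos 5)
Your proposal is correct and follows essentially the same route as the paper: the expected-slot count of \SelectUsers, Hall's theorem with a union bound over slot subsets (using that no user appears more than twice, so a set of \(s\) slots contains at least \(\lceil s/2\rceil\) distinct users, making the bound uniform in \(n\)), the Lipschitz constant \(G\) from {\bf A1}, and Hopcroft--Karp for the complexity. The only difference is cosmetic: you bound the expected matching deficiency directly via the deficiency form of Hall's theorem, whereas the paper bounds \(\PR(\text{no perfect matching})\le\theta^{-m}\) and then writes \(\bar s_i\ge \BlockVariable\mc{E}\alpha_i^*(1-\theta^{-m})\); both reduce to the same combinatorial estimate.
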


The probability that a user does not receive the highest AMC rate on any of the $m$ channels is lower bounded by $(1-\bar{h})^{m}$, for $0<\bar{h}<1$.  Hence,  for $0<\bar{h}<1$, an $\Omega\left((1-\bar{h})^{m}\right)$ loss in per-user throughput compared to the no fading case is unavoidable. 
The benchmark in Theorem~\ref{thm:benchmark} is a universal lower bound, applicable even to the case $\bar{h}=1$. Thus, for $\bar{h}<1$, there will be a gap between the benchmark and the cost incurred by any policy, including the optimal policy.
\AllocateChannels~guarantees that this gap decays exponentially with the number of channels, which is no slower than the decay of per-user throughput loss, and hence, seems to be order optimal.

The main part of the proof of this theorem requires us to show that a user for which \ConcMin\ allocates $\alpha^*_i>0$ is served at any time $t$ with probability at least $\alpha_i^*-\theta^{-m}$ for some $\theta>1$.
For this, we extend  \cite[Lem.~1]{bodas14} to the case where a user's channel states are correlated with  other $O(1)$ users.
Rest follows using Lem.~\ref{lem:benchmark} in Sec.~\ref{sec:benchmark}.

\begin{remark}
\label{rem:AMC}
The proposed algorithm and its performance guarantees are agnostic of the particular AMC technique. For any given AMC technique, the proposed (almost) minimizes the total cost due to buffering pauses for that AMC technique.
\end{remark}

\section{Unknown $\{p_i\}$}
\label{sec:unknownpi}
As discussed in Sec.~\ref{sec:model}, $\{p_i\}$ are application layer parameters and hence, not always known to the MAC scheduler of the BS a priori. Moreover, two videos with the same quality (i.e., HD, 4k) can have different $\{p_i\}$ depending on their dynamism, e.g., sports versus news. Hence, even the application layer may not know accurate values of $\{p_i\}$ a priori. {\color{black} To the best of our knowledge, none of the prior analytical works on streaming has addressed this issue.}

In practice, all multimedia sessions are of finite duration. Hence, it is also important that the allocation scheme performs well not only in terms of the asymptotic average cost, but also in terms of average cost over any reasonable  time window. Further, as discussed before, any implementable allocation scheme at the BS can at best have an infrequent feedback regarding users' media player buffers.

Under a policy $a\in \mc{A}$, let $\kappa^a_i(T)$ be the empirical frequency of pauses over $T$ epochs. Ideally, we should have a policy $a$ with low $\sum_{i \in [n]} V_i(\kappa^a_i)$ and low $\sum_{i \in [n]} V_i(\kappa^a_i(T))$ for all $T$. More precisely, if $\bar{\mc{A}}$ is the class of ergodic policies which minimize asymptotic average cost, ideally, we would like to have the policy $a^* \in \bar{\mc{A}}$, if it exists, such that for all sufficiently large $T$ and any $\bar{a} \in \bar{\mc{A}}$,
\begin{align}\sum_{i \in [n]} \EX\left[V_i(\kappa^{a^*}_i(T))\right] \le \sum_{i \in [n]} \EX\left[V_i(\kappa^{\bar{a}}_i(T))\right]. \label{eq:bandit1}
\end{align}
Note that for all sufficiently large $T$, $\bar{V}^{n,m}$ is still  a benchmark for 
$\sum_{i \in [n]} \EX\left[V_i(\kappa^{a}_i(T))\right]$. Hence,  
\eqref{eq:bandit1} is equivalent to finding $\bar{a} \in \bar{\mc{A}}$ for which  $$v(\bar{a},T) := \sum_{i \in [n]} \EX\left[V_i(\kappa^{\bar{a}}_i(T))\right] - \bar{V}^{n,m}$$ is minimum for all sufficiently large $T$. 
Clearly, for any $\bar{a} \in \bar{\mc{A}}$ as $T\to\infty$, $v(\bar{a},T)\to 0$. As the above multi-objective problem is intractable, we look for a policy under which $v(\bar{a},T)$ rapidly goes to $0$ as $T\to \infty$.

\subsection{Infrequent buffer feedback and bandits}
\label{sec:bandit}
Using Jensen's inequality to move the expectation inside $V_i$ and then using concavity of $V_i$ and assumption {\bf A1}, it follows that
$$v(\bar{a},T) \le G \sum_{i \in [n]} \max(\EX[\kappa^{\bar{a}}_i(T)] - \kappa^{\bar{a}}_i, 0).$$
Thus, for upper bounding the rate of decay of $v(\bar{a},T)$ it is sufficient to upper-bound the rate of decay of $\EX[\kappa^{\bar{a}}_i(T)] - \kappa^{\bar{a}}_i$ for each $i$. Let  $\psi^{\bar{a}}_i(T)$ denote the number of pauses for user $i$ over $T$ epochs under policy $\bar{a}$. Then, upper-bounding the rate of decay of $\EX[\kappa^{\bar{a}}_i(T)] - \kappa^{\bar{a}}_i$ is equivalent to upper-bounding the rate of growth of
$\EX[\psi^{\bar{a}}_i(T)]$.

{ It may be tempting to use the following simple approach. At the beginning, the  user estimates $p_i$ by observing the evolution of the media player buffer for some time and reports it to the BS, then the BS uses \AllocateChannels. Though this is a possible approach, it is sub-optimal. This is because for the above estimation steps, the buffers of the users need to have enough frames, for which the BS needs to transmit sufficient contents to all the users. However, during this estimation period, transmissions to the users who are not part of the optimal schedule in \ConcMin\ are in a sense wasted, which could have been used to improve experience of the other users. This implies that we need to strike a balance between exploration and exploitation.}

This naturally brings us to the setting of multi-armed bandits \cite{BubeckC2012} with non-i.i.d. cost (instead of reward), where a cost of $1$ is incurred for a user every time its stream is paused. The cost is non-i.i.d. because the cost depends on the past states of the buffer, even when $\{F_i(t)\}$ are i.i.d. Moreover, for an action taken at time $t$, the cost may be incurred at a later time. Though there is a similarity in terms of the non-i.i.d. nature of the system, the dynamics and the costs in this problem are different from the queuing bandits studied in \cite{KrishnasamySJS2016,CayciE2017,KrishnasamyAJS2018}. 

Drawing intuition from the bandit literature \cite{BubeckC2012,KrishnasamySJS2016,CayciE2017,KrishnasamyAJS2018} and the analysis of \ConcMin~and \AllocateChannels, we develop an algorithm called {\em i}nfrequent \textsc{F}eedback, \textsc{est}imate, sol\textsc{v}e, and \textsc{al}locate (\BufferBandit), which takes infrequent one bit feedback about the buffer states, estimates $\{p_i\}$ based on that, and allocates using \ConcMin~and \AllocateChannels. 

\begin{algorithm}
\caption{\BufferBandit}
\label{alg:bufferbandit}
{\bf Input:} $\{V_i\}$ and $r, w \in \{2, 3, \ldots\}$\\
{\bf Output:} Allocation at each epoch \\
{\bf Initial computation:} Define phases $\tau=1, 2, \ldots$ where $\tau$th phase consists of epochs
$(\tau-1) (w+1)\lceil \frac{n \BlockVariable}{m}\rceil+1$ to $\tau(w+1)\lceil \frac{n \BlockVariable}{m}\rceil$

\begin{algorithmic}[1]
\WHILE{System is ON}
\IF{for some $q\in\Z_+\cup\{0\}$, current phase $\tau=r^q$}
\STATE Between epochs $(\tau-1) (w+1)\lceil \frac{n \BlockVariable}{m}\rceil+1$ to $(\tau-1) (w+1)\lceil \frac{n \BlockVariable}{m}\rceil+ w \lceil \frac{n \BlockVariable}{m}\rceil$: allocate users \BlockVariable\ ON (for that user) channels each in a work conserving round-robin manner (each user is chosen for $w$ epochs and allocated \BlockVariable-channels in each one of them) \label{alg:bufferbandit:1}
\STATE Between epochs $(\tau-1) (w+1)\lceil \frac{n \BlockVariable}{m}\rceil+ w \lceil \frac{n \BlockVariable}{m}\rceil + 1$ to  $\tau (w+1)\lceil \frac{n \BlockVariable}{m}\rceil$: each user sends $\{1,0\}^w$ feedback about increment of $\{Q_i(t)\}$ or not, respectively, in the $w$ epochs they are allocated in Step \ref{alg:bufferbandit:1}
\label{alg:bufferbandit:2}
\STATE For each $i\in [n]$, based on feedback in Step \ref{alg:bufferbandit:2} update  $\hat{p}_i$ by the total number of $0$s received from user $i$ (since $t=1$) divided by $w \cdot q$
\STATE Run \ConcMin~with $\{\hat{p}_i\}$ to obtain $\{\hat{\alpha}_i\}$
\ELSE
\STATE Run \AllocateChannels~with the latest $\{\hat{\alpha}_i\}$
\ENDIF
\ENDWHILE
\end{algorithmic}
\end{algorithm}

\BufferBandit, described in Alg.~\ref{alg:bufferbandit}, divides time into phases of length $(w+1)\lceil \frac{n \BlockVariable}{m}\rceil$ epochs, where $w \in \Z_+$. For $r \in \Z_+$ and $r\ge 2$, at phases $r, r^2, r^3, \ldots$, \BufferBandit~serves each user in turn over $\BlockVariable$ channels of an entire epoch and the users record the change (increase or same) of their buffer states at the end of that epoch. In each phase this is done $w$ times in  a round-robin fashion over the first $w\lceil \frac{n \BlockVariable}{m}\rceil$ epochs of this phase. From the $(w\lceil \frac{n \BlockVariable}{m}\rceil+1)$th epoch to $(w+1)\lceil \frac{n \BlockVariable}{m}\rceil$th epoch of this phase, the BS collects all the $w$ one bit feedback regarding change of buffer states. Based on this feedback, it estimates $\{p_i\}$ and runs \ConcMin~with these estimates. For any $q \in \Z_+$ between phases $r^q$ and $r^{q+1}$, \BufferBandit~runs \AllocateChannels~with $\{\bar{\alpha}_i\}$ returned by \ConcMin~run during phase $r^q$.

As \BufferBandit~collects only infrequent  feedback ($\frac{w \log T}{T \log r}$ bits per epoch) from the user equipment, it can be implemented in practice for multimedia streaming in cellular networks. Also, feedback from each user is scheduled a priori (at particular epochs in phases $1, r, r^2, \ldots$) and hence, the uplink traffic due to the feedback is well regulated.

For \BufferBandit, we have the following guarantee on the growth of the expected number of pauses with the horizon $T$.

\begin{theorem}
\label{thm:bandit}
Under assumptions {\bf A1-A3} and i.i.d. $\{F_i(t)\}$, if $T\ge r^2 (w+1)\lceil \frac{n \BlockVariable}{m}\rceil$, and $w>\frac{2 \ln r}{\min_{i,j} |p_i - p_j|}$,  then in the absence of fading, i.e., $\mbf{H}(t)=\mbf{1}$, 
$$\EX[\psi^{\text{\BufferBanditMathMode}}_i(T)] \le \max(p_i-\alpha^*_i,0) T + \bar{C} T^{2/3}\log T,$$
for all $i \in [n]$, where $\bar{C}$ is independent of $T$. For any $\mbf{H}(t)$ satisfying {\bf A2}, if $n=\Theta(m)$ then for $\theta=\frac{2}{1+\sqrt{1-\bar{h}}}$  and $C'>0$, a constant independent of $T$,
$$\EX[\psi^{\text{\BufferBanditMathMode}}_i(T)] \le \max(p_i-\alpha^*_i+\theta^{-m},0) T + C'~\theta^{-2m}~\log T.$$
\end{theorem}

Proof of this theorem has two main steps. First, we show that after $t$ epochs, the estimations of $\{p_i\}$, which take values in $\{\frac{z}{Z}: z \in [Z]\}$, are exact with probability at least $1 - \frac{1}{t^{1+\beta}}$ for some $\beta>0$. Second, we bound the expected number of pauses between $r^q$th and $r^{q+1}$th phases assuming the estimate at the end of the $r^q$th phase is accurate. Combining these two along with some standard probability computations gives the result. Proving the first part is a standard application of Azuma-Hoeffding inequality. The second part requires bounding  the expected number of returns to state $0$ by the Markov chain $Q_i(t)$ over a finite time window. Towards that we study a stochastically dominating Markov chain using techniques from \cite{HerveL2014}.

The following result is a consequence of Thm.~\ref{thm:bandit} and the discussions on $v(\bar{a},T)$ in the beginning of Sec.~\ref{sec:bandit}.
\begin{proposition}
\label{prop:bandit}
Under assumptions {\bf A1-A3}, i.i.d. $\{F_i(t)\}$ and $\mbf{H}(t)=\mbf{1}$, if $T\ge r^2 (w+1)\lceil \frac{n \BlockVariable}{m}\rceil$, and $w>\frac{2 \ln r}{\min_{i,j} |p_i - p_j|}$
$$v(\text{\BufferBanditMathMode}, T) = O\left(\frac{\log T}{T^{1/3}}\right).$$
For any $\mbf{H}(t)$ satisfying {\bf A2}, if $n=\Theta(m)$ then for $\theta=\frac{2}{1+\sqrt{1-\bar{h}}}$,
$$v(\text{\BufferBanditMathMode}, T) \le O\left(\theta^{-m}\right) + O\left(\frac{\log T}{T}\right).$$
\end{proposition}

Arguably, the $O\left(\theta^{-m}\right)$  bound on $v(\text{\BufferBanditMathMode}, T)$ is the best that can be achieved in the presence of fading. This is because, as discussed after Theorem~\ref{thm:fading}, even when $\{p_i\}$ are known there exists an exponentially decaying (with $m$) gap between the benchmark and the optimal policy. On the other hand, in the no fading case, $v(\text{\BufferBanditMathMode}, T)$ tends to $0$ since the lower-bound in Theorem~\ref{thm:benchmark} is achievable in this case.

Note that unlike Theorem~\ref{thm:fading}, in Theorem~\ref{thm:bandit} and Proposition~\ref{prop:bandit}, we assume $\{F_i(t)\}$ to be i.i.d., which is required for the analysis. The results can be extended to Markovian $\{F_i(t)\}$ under which $\{Q_i(t)\}$ are geometrically ergodic. However, our present analytical techniques for proving Theorem~\ref{thm:bandit} does not seem to extend to general $\{F_i(t)\}$ processes.

\subsection{Without buffer feedback}
\label{sec:noback}
As mentioned before, current protocols do not generally implement a procedure to feedback the buffer states of the media player at the application layer to the BS. Though we show that \BufferBandit~requires infrequent and simple feedback, one may still ask: what is the extra cost (user dissatisfaction) if we do not use any feedback? Clearly, as there is no feedback, there is no scope to employ adaptive schemes which learn and improve. It also turns out that in the absence of any buffer feedback, the asymptotic cost is bounded away from $\bar{V}^{n,m}$, i.e., $\lim_{T\to\infty} v(a,T) \ge \delta>0$. So, in this setting, our goal is to find a stationary policy with the minimum asymptotic average cost. 

In Sec.~\ref{sec:AllocateChannels}, we observed that for any feasible ergodic rate, \AllocateChannels~(almost) achieves the lower bound on the frequency of pause. Hence, in the absence of feedback, the optimal approach would be to find the best ergodic rates and then employ \AllocateChannels.

For each user $i$, let the available information regarding $p_i$ be its cumulative distribution $G_i(\cdot)$. Then the best service rate allocation can be obtained by solving
\begin{align}
\min_{\{0\le\alpha_i\le 1\}} \sum_{i=1}^n \EX_{p_i} V_i(\max(p_i-\alpha_i,0)) \ \mbox{s.t.} \sum_i \alpha_i \le \frac{m}{\BlockVariable}. \label{eq:uncertain1}
\end{align}
In its full generality, the optimization problem in \eqref{eq:uncertain1} is quite challenging, whose numerical solution is sometimes unobtainable. Under some mild conditions on $\{G_i\}$, \eqref{eq:uncertain1}  turns out to be a minimization of sum of $n$ functions, where each function is concave on a part of its domain and convex on the rest. We leave this unique non-convex problem of independent mathematical interest as future work. Here, we restrict ourselves to the special case of linear cost: for each $i$, $V_i(x)= w_i x$ and $p_i \in [a_i,b_i] \subsetneq [0,1]$ for distinct $\{w_i\}$. 

Without loss of generality, let us assume \(w_1 < w_2 <  \ldots < w_n\) and
define \(w_0 = 0\).
\UncertainPi,  which stands for u\textsc{N}known c\textsc{o}nsumption from the \textsc{b}uffer without feedb\textsc{ack} (Alg.~ 
\ref{alg:uncertainpi}), solves \eqref{eq:uncertain1} for linear $\{V_i\}$.

\begin{algorithm}
\caption{\UncertainPi}
\label{alg:uncertainpi}
{\bf Input:} \(\{w_i\}\), \(\{(a_i, b_i)\}\), \(\{G_i\}\), $c=\frac{m}{\BlockVariable}$ \\
{\bf Output:} \(\{\alpha_i^*\}\)

\begin{algorithmic}[1]
\IF {\(\sum_{i=1}^n b_i \le c\)}
    \STATE \(\alpha_i^* \leftarrow b_i\) for all \(i \in [n]\)
\ELSE
    \STATE \(l \leftarrow 0\) \label{alg:uncertainpi:startlambdasearch}
    \REPEAT 
        \STATE \(l \leftarrow l+1\)
    \UNTIL {\(\sum_{i=l}^n G_i^{-1}(1 - w_l/w_i) \le c\)} \label{alg:uncertainpi:untilconditionlambdasearch}
    \STATE \(\lambda^*\! \leftarrow\! \text{solve} \left\{
                        \sum_{i=l}^n G_i^{-1}(1-\lambda/w_i) = c \right\} \mbox{ for } \lambda\!\! \in\!\! (0,w_l]\)
                        \label{alg:lin:solvelambda}
    \IF {\(\lambda^* > w_{l-1}\)} \label{alg:uncertainpi:startsettingalpha}
        \STATE \(\alpha_i^* \leftarrow G_i^{-1}(1-\lambda^*/w_i)\) 
                                                    for \(i \ge l\)
        \STATE \(\alpha_i^* \leftarrow 0\) for \(i < l\)
    \ELSE
        \STATE \(\alpha_i^* \leftarrow G_i^{-1}(1-w_{l-1}/w_i)\)
                                                    for \(i \ge l\)
        \STATE \(\alpha_{l-1}^* \leftarrow c - \sum_{i=l}^n\alpha_i^*\)
        \STATE \(\alpha_i^* \leftarrow 0\) for \(i < l-1\)
    \ENDIF \label{alg:uncertainpi:stopsettingalpha}
\ENDIF
\end{algorithmic}
\end{algorithm}

The main intuition behind this algorithm is the fact that $\mc{V}_i(\alpha_i):=\EX_{p_i \sim G_i} V_i(\max(p_i-\alpha_i,0))$ is convex for linear $\{V_i\}$. So, we build on the KKT optimality conditions \cite{BoydV2004} of \eqref{eq:uncertain1} to design \UncertainPi~ 
 for computing the optimal allocation. 
The following lemma asserts the correctness of \UncertainPi. 

\begin{lemma}
\UncertainPi\ outputs $\{\alpha^*_i\}$ which solves \eqref{eq:uncertain1} for any strictly increasing \(\{G_i\}\), if the solution at
Step \ref{alg:lin:solvelambda} is obtained accurately.
\label{thm:uncertainpiworks}
\end{lemma}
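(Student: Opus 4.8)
The plan is to identify \eqref{eq:uncertain1} with the KKT system of a convex program and then read \UncertainPi\ off those conditions. First I would write the per-user objective for $V_i(x)=w_i x$ as $\mc{V}_i(\alpha_i)=w_i\,\EX_{p_i\sim G_i}[\max(p_i-\alpha_i,0)]$ and differentiate: the boundary contribution vanishes because the integrand is zero at $p_i=\alpha_i$, leaving $\mc{V}_i'(\alpha_i)=-w_i\big(1-G_i(\alpha_i)\big)$. Since $G_i$ is non-decreasing, $\mc{V}_i'$ is non-decreasing, so each $\mc{V}_i$ is convex and the separable objective is convex over the polytope $\{0\le\alpha_i\le 1,\ \sum_i\alpha_i\le c\}$. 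This polytope has non-empty interior, so Slater's condition holds and the KKT conditions are necessary and sufficient for global optimality. The under-loaded branch $\sum_i b_i\le c$ is settled immediately---$\alpha_i^*=b_i$ zeroes every $\mc{V}_i$---so I would assume $\sum_i b_i>c$, which forces the budget to bind, $\sum_i\alpha_i^*=c$.

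Next I would solve the KKT system. Attaching a multiplier $\lambda\ge 0$ to the budget and multipliers to the box constraints, stationarity gives $w_i\big(1-G_i(\alpha_i^*)\big)=\lambda$ at any interior coordinate, while complementary slackness reads that $\alpha_i^*=0$ requires $\lambda\ge w_i$ and $\alpha_i^*=b_i$ requires $\lambda=0$ (using $G_i(b_i)=1$). Using $G_i^{-1}(0)=a_i$ and $G_i$ strictly increasing, this collapses to a single water level $\lambda^*$: every user with $w_i>\lambda^*$ is served in the interior at $\alpha_i^*=G_i^{-1}(1-\lambda^*/w_i)$, every user with $w_i<\lambda^*$ gets $0$, and a user with $w_i=\lambda^*$ may take any boundary value in $[0,a_i]$. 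Because $w_1<\cdots<w_n$, the served users form an upper tail $\{l,\dots,n\}$ and $\lambda^*$ is pinned by $\sum_i\alpha_i^*=c$. This is precisely the pair $(l,\lambda^*)$ that \UncertainPi\ searches for.

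Finally I would verify that the algorithm returns this pair. Set $T(l,\lambda)=\sum_{i=l}^n G_i^{-1}(1-\lambda/w_i)$, which for strictly increasing $G_i$ is continuous and strictly decreasing in $\lambda$ on $(0,w_l]$, running from $\sum_{i\ge l}b_i$ as $\lambda\to 0^+$ down to $S(l):=T(l,w_l)=a_l+\sum_{i>l}G_i^{-1}(1-w_l/w_i)$. I would first check that $S(l)$ is strictly decreasing in $l$, so the \textbf{repeat} loop is well defined and halts at the smallest $l$ with $S(l)\le c$, giving $S(l)\le c<S(l-1)$. In the \textbf{if} branch the root $\lambda^*$ of $T(l,\lambda)=c$ lies in $(w_{l-1},w_l]$; strict monotonicity makes it unique, the tail $\{l,\dots,n\}$ is served in the interior and everyone below gets $0$, matching the KKT optimum. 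In the \textbf{else} branch---either the root is $\le w_{l-1}$ or no root exists in $(0,w_l]$ because $c>\sum_{i\ge l}b_i$---I would use $S(l-1)>c\ge S(l)$ to show that $c-\sum_{i\ge l}G_i^{-1}(1-w_{l-1}/w_i)\in[0,a_{l-1})$, so fixing $\lambda^*=w_{l-1}$, serving the tail, and handing the leftover to the boundary user $l-1$ is both feasible and KKT-optimal. In either branch $\sum_i\alpha_i^*=c$ and all KKT conditions hold, so the output solves \eqref{eq:uncertain1}.

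The main obstacle is the boundary bookkeeping forced by $G_i^{-1}(0)=a_i>0$: as $\lambda$ decreases past each threshold $w_i$ the coordinate $\alpha_i$ jumps discontinuously from $0$ up to $a_i$, so the total-allocation map is not onto and the budget can be met only at a breakpoint $\lambda^*=w_{l-1}$ carrying a single fractional user. Proving that the two branches align exactly with these jumps---in particular that the leftover assigned to user $l-1$ lands in $[0,a_{l-1})$ precisely under the stopping condition $S(l-1)>c\ge S(l)$, and that the non-existence of a root in $(0,w_l]$ is equivalent to the \textbf{else} case---is the delicate step; the surrounding convexity and monotonicity facts are routine.
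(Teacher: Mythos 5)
Your proposal is correct and follows essentially the same route as the paper: establish convexity of the expected cost, write down the KKT conditions for the program with the budget and box constraints, and read off the water-level structure in which $\lambda^*$ thresholds the users by $w_i$, with at most one fractional user absorbing the leftover at a breakpoint. Your verification of the \textbf{repeat}-loop stopping condition and of the two branches (including the jump of $\alpha_i$ from $0$ to $a_i$ as $\lambda$ crosses $w_i$) is actually more explicit than the paper's, which asserts that the algorithm finds the $\lambda$ consistent with the KKT equations without spelling out the monotonicity of the partial sums.
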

Please see Appendix~\ref{sec:uncertainpiproof} for the proof of Lemma~\ref{thm:uncertainpiworks}.

For \(\{G_i\}\) strictly increasing over $\{[a_i,b_i]\}$,  \(\lambda^*\) at 
Step \ref{alg:lin:solvelambda} of \UncertainPi~can be obtained 
using binary search, whereas for special distributions, there exist closed forms.
For  uniform distribution, i.e., \(p_i \sim \text{Unif}[a_i,b_i]\) for all $i$:
\begin{align}
\lambda^* = \frac{\sum_{i=l}^n b_i - c}{\sum_{i=l}^n \frac{b_i-a_i}{w_i}}.
\label{eq:lambdauniform}
\end{align}

For uniform distribution we have the following unconditional correctness guarantee  and a bound on the computational complexity. This result follows from the above lemma and the fact that for uniform distributions, there exists a closed form solution to  Step \ref{alg:lin:solvelambda}.
\begin{theorem}
When \(\{p_i \sim \text{\em Unif}[a_i,b_i]\}\), \UncertainPi~finds the 
optimum of \eqref{eq:uncertain1} in \({O}(n^2)\).
\end{theorem}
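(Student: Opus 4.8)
The plan is to obtain this theorem as a corollary of Lemma~\ref{thm:uncertainpiworks}, treating its two assertions—optimality of the output and the $O(n^2)$ running time—in turn.

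For optimality, I would first check that the uniform family meets the hypotheses of Lemma~\ref{thm:uncertainpiworks}. The CDF of $\text{Unif}[a_i,b_i]$ equals $G_i(x) = (x-a_i)/(b_i-a_i)$ on $[a_i,b_i]$, which is affine and therefore strictly increasing on its support, so the ``strictly increasing $\{G_i\}$'' requirement holds. The only remaining hypothesis is that Step~\ref{alg:lin:solvelambda} be solved accurately. To discharge it, I would substitute the affine inverse CDF $G_i^{-1}(y) = a_i + (b_i-a_i)y$ into the defining relation $\sum_{i=l}^n G_i^{-1}(1-\lambda/w_i) = c$; collecting terms makes the equation linear in $\lambda$, and solving yields precisely the closed form \eqref{eq:lambdauniform}. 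Since this is an exact formula rather than a numerical approximation, Step~\ref{alg:lin:solvelambda} is solved exactly, so Lemma~\ref{thm:uncertainpiworks} applies and the returned $\{\alpha_i^*\}$ is optimal for \eqref{eq:uncertain1}.

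For the running time, I would account for each block of \UncertainPi. The feasibility test $\sum_i b_i \le c$ is $O(n)$. The REPEAT loop (Steps~\ref{alg:uncertainpi:startlambdasearch}--\ref{alg:uncertainpi:untilconditionlambdasearch}) increments $l$ through at most $n$ values, and for each it evaluates the stopping test $\sum_{i=l}^n G_i^{-1}(1-w_l/w_i) \le c$; with the affine inverse CDF each summand costs $O(1)$, so each test is $O(n)$ and the loop is $O(n^2)$, which is the dominant cost. Evaluating $\lambda^*$ from \eqref{eq:lambdauniform} needs two length-$O(n)$ sums, hence $O(n)$, and the final assignment of $\{\alpha_i^*\}$ in Steps~\ref{alg:uncertainpi:startsettingalpha}--\ref{alg:uncertainpi:stopsettingalpha} is $O(n)$ because each entry is a single inverse-CDF evaluation. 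Summing the blocks gives $O(n^2)$ overall.

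I do not expect a genuine obstacle, since the substantive correctness content already lives in Lemma~\ref{thm:uncertainpiworks}; the real work here is only to note that the uniform family provides a closed form making Step~\ref{alg:lin:solvelambda} exact, together with a direct operation count. The one point I would be careful about is the stopping test inside the REPEAT loop: even recomputing $\sum_{i=l}^n G_i^{-1}(1-w_l/w_i)$ from scratch at every $l$ keeps the loop at $O(n^2)$, so no incremental bookkeeping is required to meet the claimed bound.
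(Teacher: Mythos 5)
Your proposal is correct and follows essentially the same route as the paper: both reduce optimality to Lemma~\ref{thm:uncertainpiworks} by observing that the uniform CDFs are strictly increasing and that the closed form \eqref{eq:lambdauniform} makes Step~\ref{alg:lin:solvelambda} exact, and both obtain $O(n^2)$ from the at-most-$n$ iterations of the \textbf{repeat} loop, each costing $O(n)$, with the remaining steps taking $O(n)$. Your version is slightly more explicit about verifying the lemma's hypotheses, but there is no substantive difference.
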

\begin{proof}
For uniform distributions, the cumulative distribution functions \(G_i(\cdot)\) are linear, whose sum can be easily inverted. Computing the solution to Step \ref{alg:lin:solvelambda}, given by  \eqref{eq:lambdauniform}, 
has a time complexity of \({O}(n)\), and we need to compute this expression at most \(n\) times in Step \ref{alg:uncertainpi:untilconditionlambdasearch}.
Once we have found the correct \(\lambda\), computing the values of \(\{\alpha_i^*\}\) just requires a single pass over all the \(n\) users in Steps \ref{alg:uncertainpi:startsettingalpha}-\ref{alg:uncertainpi:stopsettingalpha} of Algorithm \ref{alg:uncertainpi}.
This ensures that the complexity of \UncertainPi\ is \({O}(n^2)\).
\end{proof}

For $\{p_i \sim \text{Unif}[a_i,b_i]\}$ it turns out that $\{\mc{V}_i\}$ are not strongly convex functions. So, among the gradient based methods, the best convergence bound for obtaining an $\epsilon$-accurate solution is $O\left(\frac{1}{\sqrt{\epsilon}}\right)$, which is due to the accelerated gradient descent algorithm \cite{Nesterov1998}. On the other hand, using a special structure of the problem (via KKT conditions), \UncertainPi~gives the exact optimum in $O(n^2)$ time. As the number of users per cell is in the range $10^1$--$10^2$, \UncertainPi~is computationally inexpensive.

\section{Simulation}
\label{sec:simulation}

\begin{figure}
\centering
\subfloat[]{
    \label{fig:known-pi}
    \includegraphics[width=0.9\columnwidth]{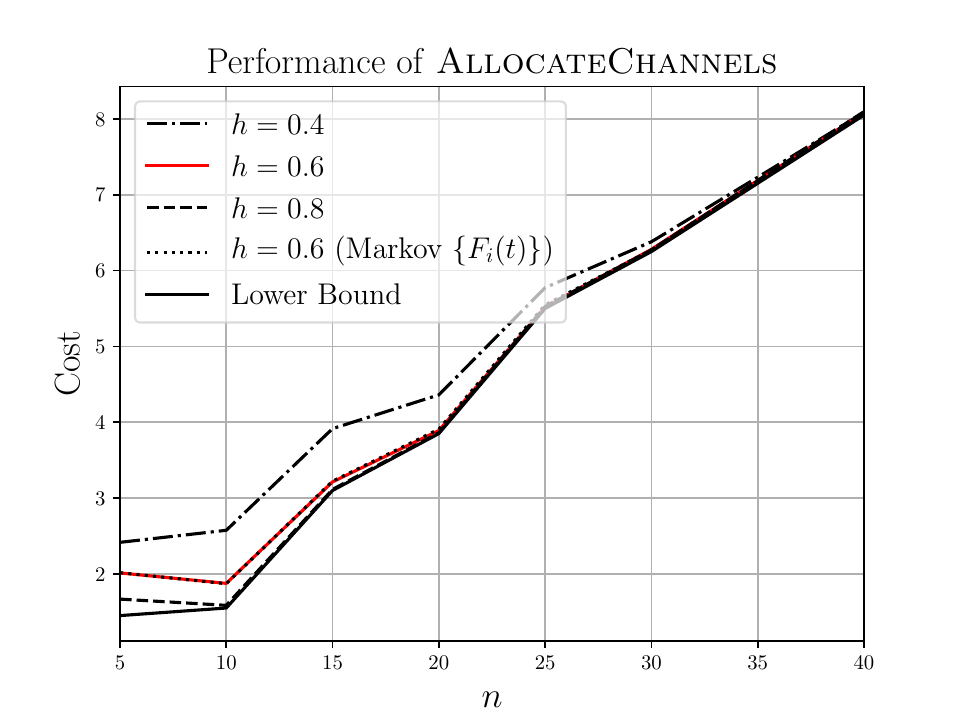}
}

\subfloat[]{
    \label{fig:unknown-pi}
    \includegraphics[width=0.9\columnwidth]{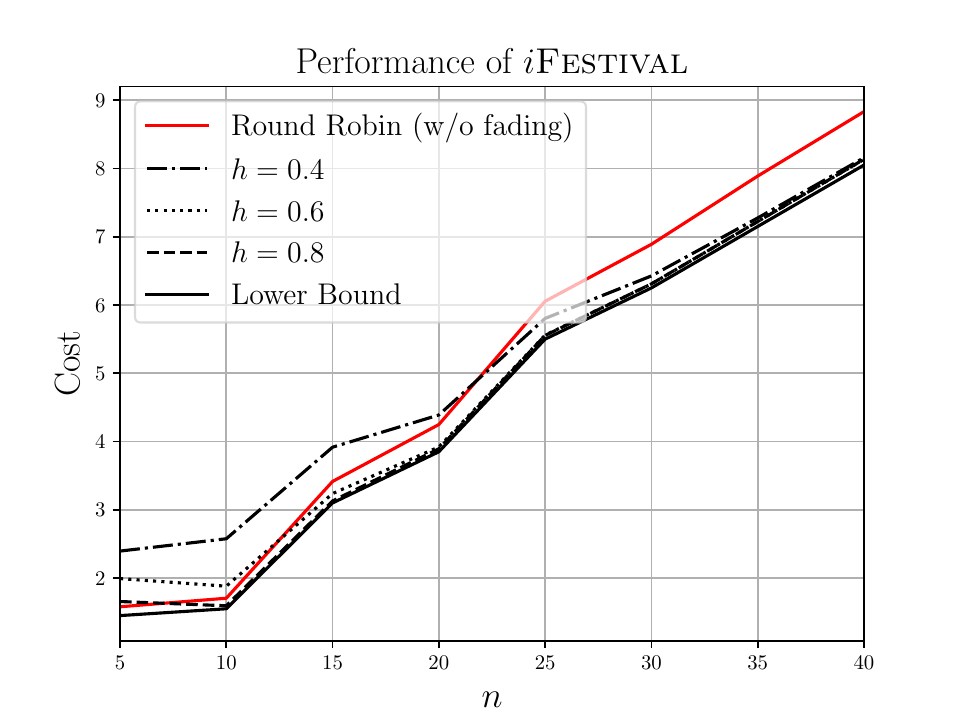}
}
    \caption{Performance of the algorithms.}
\end{figure}

For our simulations, we consider a system where  
the number of channels \(m\) scales as $0.4\ \times$ the number of users \(n\).
The mean consumption rates $\{p_i\}$ are drawn uniformly at random from the  set
$\{0.40, 0.45, 0.50, \ldots 0.80\}$.
The channels are assumed to be i.i.d. Bernoulli with ON probability  \(h\). For this overloaded system, we use  the following class of cost functions: $V_i(x) = p_i^{\theta} x^{1-\theta}$, for some $0<\theta<1$. 

In Fig. \ref{fig:known-pi}, we show the performance of \AllocateChannels, 
which requires us to know the mean consumption rates \(\{p_i\}\).
Here, we plot the asymptotic cost \({V}^{n,m}\) of running
\AllocateChannels\ under different fading conditions and compare
it with the lower bound.
For \(h=0.6\) and \(0.8\), \AllocateChannels\ is close to the lower bound at \(n=15\), and the cost almost matches the lower bound at \(n=20\).
Even for poor channel conditions (\(h=0.4\)), it matches the
lower bound at \(n=30\), i.e., $m=12$ channels.
We also observe almost the same performance when we change the consumption process  \(\{F_i(t)\}\) from i.i.d.
to Markov (with the same \(\{p_i\}\)).
This is expected since our theoretical  guarantees extend to any stationary and ergodic process.

In Fig. \ref{fig:unknown-pi}, we show the performance of \BufferBandit~which does not know the consumption statistics a priori and adapts as it learns those on the fly. 
For  different values of \(h\), we observe its performance to be close to \AllocateChannels~as well as the lower bound. We also compared its performance against a round robin schedule  run on a system that does not
experience fading (i.e., \(h=1\)).
Even under  heavy fading (\(h=0.4\)), for \(n\ge 25\), i.e., $m\ge 10$, \BufferBandit\ beats round robin's performance in the idealized scenario without fading.
This demonstrates that there is a significant benefit to learning
the consumption statistics using \BufferBandit\ and employing an algorithm such as
\AllocateChannels\ that optimally
utilizes the diversity of channel conditions.

\section{Quality degradation}
\label{sec:quality}

So far, we have addressed minimizing the cost due to buffering pauses when the network is overloaded, i.e., all users cannot  be supported at their minimum acceptable resolution levels. 
However, as discussed at the end of Sec.~\ref{sec:intro}, in an underloaded network, it is imperative to address user dissatisfaction due to quality degradation as well. In any network, the main objective then would be  mitigating buffering pause using {\em minimal} resource and then using the remaining resource to minimize user dissatisfaction due to quality degradation.

From the performance analysis in Thm.~\ref{thm:fading} and Sec.~\ref{sec:simulation}, it follows that \AllocateChannels~achieves almost zero frequency of pause using minimum number of channels $\BlockVariable \sum_i p_i$. Thus in an underloaded network, the first step would be to run \AllocateChannels~on $\BlockVariable \sum_i p_i$ channels and use the rest of the channels to minimize cost due to quality degradation.

For user $i$, let $q_i \in [0,1]$ be $\EX[F_i(t)]$ when the content is at the highest available resolution. For each user $i$, let  $W_i:\R_+ \to \R_+$ be a non-decreasing function. Then, in an underloaded network, the dissatisfaction of user $i$ due to quality degradation can be modeled as $W_i(q_i - \frac{\bar{s}_i}{\BlockVariable\mc{E}})$, where $\bar{s}_i$ is the ergodic service rate received by user $i$. So the problem of minimizing cost due to quality degradation using the remaining resource (after ensuring zero frequencies of  pause) is
\begin{align}
\min_{\{\bar{s}_i\ge p_i \BlockVariable\mc{E}\}}\sum_{i\in[n]} W_i(q_i - \frac{\bar{s}_i}{\BlockVariable\mc{E}}) \mbox{ s.t. } \sum_{i\in[n]} \frac{\bar{s}_i}{\BlockVariable\mc{E}} \le \frac{m}{\BlockVariable}. \nonumber
\end{align}

{ One may interpret  the cost $W_i(q_i - \frac{\bar{s}_i}{\BlockVariable\mc{E}})$ as a positive constant minus a utility that increases with the increased ergodic service rate. Following the intuition from traditional data networks where the utility saturates with increasing data rate, one may model $\{W_i\}$ as convex increasing functions. In that case the above problem is a standard convex optimization problem. On the other hand, if $\{W_i\}$ are modeled as concave increasing functions, a simple change of variables reduces it to \eqref{eq:benchmark}, and hence, can be solved using \ConcMin.} Thus, our work addresses both buffering pause and quality degradation.

\section{Conclusion}
\label{sec:discussion}
We study a resource allocation problem for minimizing user dissatisfaction due to buffering pauses during streaming over a multi-channel cellular network. Our consideration of a few previously overlooked practical aspects leads us to a novel continuous non-convex  problem with an interesting combinatorial structure. This problem is also related to learning in non-i.i.d. multi-armed bandits with delayed cost. We propose computationally efficient algorithms that are compatible with the current cellular implementations and provide theoretical guarantees for their (near) optimality.

\bibliographystyle{IEEEtran}
\bibliography{abrv,conf_abrv,multimedia}

\appendices
\section{Proofs of Lemma \ref{lem:benchmark} and Theorem \ref{thm:benchmark}}
\label{sec:benchmarkproof}

Let \(X_i(t) = \frac{Q_i(t)}{k\mc{E}}\), where \(Q_i(t)\) is the buffer evolution process defined in Sec. \ref{sec:model}.
Using assumption {\bf A3}, we can write the buffer evolution compactly as
\begin{align}
\label{eq:Xevolve}
X_i(t+1) = \left(X_i(t) + \frac{S_i(t)}{k\mc{E}} - F_i(t)\right)^+,
\end{align}
where \((\cdot)^+\) denotes \(\max(\cdot, 0)\).
Since we schedule in units of \(k\mc{E}\), we have \(\frac{S_i(t)}{k\mc{E}} \in \{0,1,\ldots\}\).

Following the discussion in Sec. \ref{sec:model}, using assumptions {\bf A1-A3}, we can express the probability of pause at time $t$ as
\begin{align*}
\EX\left[\mbf{1}\left(X_i(t) + \frac{S_i(t)}{k\mc{E}} - F_i(t) < 0\right)\right].
\end{align*}
Using the buffer evolution in Eq. \eqref{eq:Xevolve}, this can equivalently be written as
\begin{align}
\label{eq:kappai}
\EX\left[\mbf{1}\left(X_i(t+1) - \left(X_i(t) + \frac{S_i(t)}{k\mc{E}} - F_i(t)\right) > 0\right)\right].
\end{align}
This is because whenever \(X_i(t) + \frac{S_i(t)}{k\mc{E}} - F_i(t) \ge 0\),
\(X_i(t+1)\) would be equal to this expression and the argument of  the indicator in the above equation for \(\kappa_i\) would be \(0\). 
The only way for it to be positive is when \(X_i(t) + \frac{S_i(t)}{k\mc{E}} - F_i(t) < 0\).

Further, observe that since \(X_i(t) \in \{0,1,2,\ldots\}\), \(\frac{S_i(t)}{k\mc{E}} \in \{0,1,2,\ldots\}\), and \(F_i(t) \in \{0,1\}\), we have
\begin{align*}
X_i(t+1) - \left(X_i(t) + \frac{S_i(t)}{k\mc{E}} - F_i(t)\right) \in \{0,1\}.
\end{align*}
This implies that the indicator in Eq. \eqref{eq:kappai} is redundant as its
argument is always \(0\) or \(1\). 
So we get that the probability of pause at time $t$ is
\begin{align*}
\EX\left[X_i(t+1) - \left(X_i(t) + \frac{S_i(t)}{k\mc{E}} - F_i(t)\right)\right].
\end{align*}

When the buffer evolution is stationary and ergodic, i.e., \(\EX[F_i(t)] > \EX\left[\frac{S_i(t)}{k\mc{E}}\right]\), the processes are all stationary, and we have \(\EX[X_i(t+1)]=\EX[X_i(t)]\), and this gives us
\begin{align*}
\EX[F_i(t)] -  \EX\left[\frac{S_i(t)}{k\mc{E}}\right],
\end{align*}
which, by ergodicity, implies that 
$$\kappa_i = \EX[F_i(t)] -  \EX\left[\frac{S_i(t)}{k\mc{E}}\right].$$
Using the definition of \(\bar{s}_i\) in Lem. \ref{lem:benchmark}, and the definition of \(p_i\) in assumption {\bf A3}, we get
\begin{align*}
\kappa_i = p_i - \frac{\bar{s}_i}{k\mc{E}}
\end{align*}
which concludes our proof for the stationary and ergodic case.

When \(p_i \le \frac{\bar{s}_i}{k\mc{E}}\), the result follows by observing the drift of $\{X_t\}$ and the fact that if the expectations of a sequence of non-negative random variables upper-bounded by $1$ are $0$, then the sequence converges to $0$ almost surely.

\subsection{Proof of Theorem \ref{thm:benchmark}}
\label{sec:benchmarkproof:thm}

Let \(S_i^*(t)\) be the service under an optimal policy \(a^*\), and let the buffer evolution under such a policy be \(Q_i^*(t)\) for each user \(i\).
At any epoch, we have a total of \(m\mc{E}\) slots that can be scheduled, and this means
\begin{align*}
\sum_{i\in[n]} S_i^*(t) \le m\mc{E}
\end{align*}
for every epoch \(t\).

Since this hold for every epoch, the time average must satisfy this inequality as well, giving us
\begin{align*}
\sum_{i\in[n]}\bar{s}_i^* \le m\mc{E},
\end{align*}
where \(\bar{s}_i^*\) are the ergodic service rates under an optimal policy.
This implies that
\begin{align}
\label{eq:servicebound}
\sum_{i\in[n]}\frac{\bar{s}_i^*}{k\mc{E}} \le \frac{m}{k}.
\end{align}

Using Lem. \ref{lem:benchmark}, we get
\begin{align*}
V^{n,m}(a^*) = \sum_{i\in[n]} V_i\left(\max\left(p_i - \frac{\bar{s}_i^*}{m\mc{E}}, 0\right)\right).
\end{align*}

Since the optimal policy must satisfy Eq. \eqref{eq:servicebound}, the solution of the program in Thm. \ref{thm:benchmark} (Eq. \eqref{eq:benchmark}), can only have a lower value.
This gives us
\begin{align*}
V^{n,m}(a^*) \ge \bar{V}^{n,m}.
\end{align*}

\section{Proof of Theorem \ref{thm:concmin}}
\label{sec:concminproof}

First we shall prove that \ConcMin\ indeed finds the optimal service 
rates \(\{\alpha_i^*\}\). The optimization problem we are trying to solve
can be written as:
\begin{align}
\underset{\{\alpha_i\}}{\text{minimize}}\quad&
    \sum_i V_i (p_i - \alpha_i) \tag*{}\\
\text{subject to}\quad&
    \sum_i \alpha_i \le c \label{eq:sumlec}\\
\text{and}\quad&   
    0 \le \alpha_i \le p_i \quad \forall\ i \in [n]. \label{eq:alphadomain}
\end{align}

Recall that \(c=\frac{m}{k}\).
Since \(\{V_i\}\) are all concave functions, the optimal solution
happens at a corner point of the region defined by constraints
\eqref{eq:sumlec} and \eqref{eq:alphadomain}.
We have a total of \(2n+1\) linear inequations defining 
the feasible region (\(1\) in constraint \eqref{eq:sumlec} 
and \(2n\) in constraint \eqref{eq:alphadomain}).
Since there are \(n\) optimization variables \(\{\alpha_i\}\),
at every corner point, \(n\) of the inequations will hold with equality.
However, \(\alpha_i\) can't be equal to both \(0\) and \(p_i\), and so
at most \(n\) of the inequalities in constraint \eqref{eq:alphadomain} can
hold with equality.
As we just have one other constraint in \eqref{eq:sumlec}, we need at least
\(n-1\) of the constraints to hold with equality in constraint 
\eqref{eq:alphadomain}.
Therefore, in the optimal solution to the optimization problem,  there is 
at most one user who gets a non-zero rate but is not fully satisfied.

Let \(\mathcal{P} = \sum_ip_i\).
When \(\mathcal{P} \le c\), the optimal solution is trivial and we get
\(\alpha_i^* = p_i\) for all \(i\).
This case is handled in line \ref{alg:lin:plec} of \ConcMin.
Now consider the case \(\mathcal{P} > c\).
Let \(k^*\) be such that for all \(i \neq k^*\), either \(\alpha_i^*=0\) or
\(\alpha_i^* = p_i\) in the optimal solution \(\{\alpha_i^*\}\).
The preceding arguments guarantee that there is at least one such \(k^*\).
We find this \(k^*\) by looping over all of [n] in line 
\ref{alg:lin:findfracuser} of \ConcMin.
For each \(k \in [n]\), we find the optimal solution \(\{\alpha_i^k\}\) 
that satisfies, for all \(i \neq k\), \(\alpha_i^k = 0\) 
or \(\alpha_i^k = p_i\). 
Then we take the best among these over all values of \(k\).

When \(\mathcal{P} > c\), given a fixed \(k\), define 
\(S_k, Q_k \subseteq [n] \setminus k\) so that the ``optimal'' solution
\(\{\alpha_i^k\}\) satisfies the following properties:
\begin{align*}
\alpha_i^k &= 0 \quad \forall \ i \in S_k \\
\alpha_i^k &= p_i \quad \forall \ i \in Q_k \\
S_k \cap Q_k=\phi\quad &\text{and} \quad S_k \cup Q_k \cup \{k\} = [n]
\end{align*}
Since \(V_i(p_i) = V \cdot p_i\), \(S_k\) 
(or equivalently \(Q_k\)) can be found by solving
\begin{align*}
\underset{S_k}{\text{minimize}}&\quad 
    V\cdot\sum_{i \in S_k}p_i + 
            V_k\left(\mathcal{P}-c-\sum_{i \in S_k}p_i\right) \\
\text{subject to}&\quad
    \mathcal{P}-c-p_k \le \sum_{i \in S_k}p_i \le \mathcal{P}-c.
\end{align*}
The objective is a concave function of \(\sum_{i \in S_k}p_i\) and so the
minimum objective occurs at the maximum or minimum 
feasible value of \(\sum_{i \in S_k}p_i\).
We find \(\max_{S_k} \sum_{i \in S_k}p_i\) by solving 
\(\SSum([n] \setminus k, \mathcal{P}-c) (= R_k)\) on line 
\ref{alg:lin:findrk} of \ConcMin.
\(\min_{S_k} \sum_{i \in S_k}p_i\) subject to 
\(\sum_{i \in S_k}p_i \ge \mathcal{P}-c-p_k\) is the same as solving
\(\max_{Q_k} \sum_{i \in Q_k}p_i\) subject to \(\sum_{i \in Q_k}p_i \le c\).
This we do by \(\SSum([n] \setminus k, c)(=L_k)\) on line
\ref{alg:lin:findlk} of \ConcMin.
We then compare the costs of \(L_k\) and \(R_k\) to get the 
solution \(\{\alpha_i^k\}\) and the corresponding cost \(J_k\).

Observe that the optimal solution \(\{\alpha_i^*\}\) satisfies
\(\sum_i \alpha_i^* = c\) when \(\mathcal{P} > c\).
Also, we have \(\alpha_i^* = \alpha_i^{k^*}\) for some \(k^* \in [n]\).
Since we are comparing amongst feasible solutions \(\{\alpha_i^k\}\)
in line \ref{alg:lin:findkstar} of \ConcMin, we get the optimal \(k^*\)
and hence the optimal solution \(\{\alpha_i^*\}\).
This shows that \ConcMin\ outputs the optimal solution.

See Sec. \ref{sec:ConcMin} for a discussion on the computational complexity of \ConcMin.

\section{Proof of Theorem \ref{thm:fading}}
\label{sec:fadingproof}

The proof of Theorem \ref{thm:fading} follows from the following two observations: (i) the expected number of slots  given to user \(i\) by \SelectUsers\ is \(\alpha_i^*\), and (ii) there exists a perfect matching between the selected users and channels (having the highest fading state or rate $1$)  with very high probability.
We state these as two lemmas.

\begin{lemma}
\label{lem:selectgivesalpha}
Let \(N_i(t)\) be the number of times user \(i\) appears in the list selected by \SelectUsers\ at epoch \(t\). Then \(\EX[N_i(t)]=\alpha_i^*\).
\end{lemma}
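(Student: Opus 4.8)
The plan is to combine linearity of expectation with a careful accounting of how \SelectUsers\ lays out the masses $\{\alpha_i^*\}$ across the $m$ unit slots. First I would formalize the interval construction of Alg.~\ref{alg:SelectUsers}: for each user $i$ and slot $j\in[m]$, let $\beta_{i,j}\ge 0$ denote the length of user $i$'s mass $\alpha_i^*$ that the algorithm deposits into slot $j$ (the value $\beta$ attached to the pair $(i,\beta)\in Q$ while slot $j$ is being filled, as depicted in Fig.~\ref{fig:SelectUsers}). Two bookkeeping identities drive the entire argument. Summing along a slot gives $\sum_i \beta_{i,j}\le 1$, with equality exactly when the slot is completely filled. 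Summing along a user gives $\sum_{j=1}^m \beta_{i,j}=\alpha_i^*$; this says that all of user $i$'s mass is deposited somewhere among the first $m$ slots and nothing spills past slot $m$, which holds because $\sum_i \alpha_i^* \le m$ (equality in the overloaded case, and slack producing trailing vacant slots in the underloaded case).

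Next I would write $N_i(t)=\sum_{j=1}^m \mathbf{1}(s_j=i)$, so that by linearity $\EX[N_i(t)]=\sum_{j=1}^m \PR(s_j=i)$. The remaining ingredient is the marginal law of the within-slot selection: I claim that for each slot $j$ the user drawn by the exponential-clock step of \SelectUsers\ satisfies $\PR(s_j=i)=\beta_{i,j}$. This is precisely the property recorded by the comment ``select $\nu$ with probability $\beta$'' in Alg.~\ref{alg:SelectUsers}: drawing a clock $Y_\nu$ for each candidate in $Q_j$ and selecting the extremal clock value is a device that realizes a categorical draw over the users present in slot $j$ with probabilities equal to their deposited masses $\{\beta_{\nu,j}\}$. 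Verifying this marginal --- that the clock-based rule returns candidate $i$ with probability exactly $\beta_{i,j}$ --- is the single genuinely probabilistic step, and I would isolate it as a short standalone claim about competing exponential clocks, since everything else is deterministic accounting.

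Combining the two pieces then gives $\EX[N_i(t)]=\sum_{j=1}^m \PR(s_j=i)=\sum_{j=1}^m \beta_{i,j}=\alpha_i^*$, which is the assertion. I expect the main obstacle to be not any sharp estimate but the two items just flagged: (i) confirming that the per-slot selection marginal matches the deposited mass $\beta_{i,j}$ (the only place randomness enters), and (ii) the underloaded edge case, where a trailing slot is partially filled so that $\sum_i \beta_{i,j}<1$ and a residual ``no user selected'' probability $1-\sum_i\beta_{i,j}$ appears; one must check this residual does not perturb the per-slot contribution $\PR(s_j=i)=\beta_{i,j}$ and hence leaves the identity $\sum_j \beta_{i,j}=\alpha_i^*$ intact. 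Both reduce to routine verification once the deposit variables $\{\beta_{i,j}\}$ are defined.
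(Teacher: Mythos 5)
Your proposal is correct and takes essentially the same route as the paper: the paper likewise reduces the claim to the fact that the per-slot selection probability equals the mass deposited in that slot and then applies linearity of expectation, merely phrasing the accounting via the observation that each user occupies at most two slots (so your sum \(\sum_{j}\beta_{i,j}\) has at most two nonzero terms, called \(\alpha_a\) and \(\alpha_b\) there). Your explicit \(\beta_{i,j}\) bookkeeping and your flagging of the partially filled trailing slot in the underloaded case are slightly more careful renderings of the same argument, not a different one.
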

\begin{proof}
Since \(\alpha_i^* \le p_i \le 1\) for all \(i \in [n]\), and the each slot has an interval of size \(1\), a user can appear for at most two slots (see Figure \ref{fig:SelectUsers}).
If the user's \(\alpha_i^*\) occupies only one slot, then the lemma follows directly since the user gets selected for that slot with probability \(\alpha_i^*\) and for no other slot.
If the user occupies two slots, then user gets selected for some slot \(s_j\) with probability \(\alpha_a\) and for \(s_{j+1}\) with probability \(\alpha_b\) such that \(\alpha_a+\alpha_b=\alpha_i^*\).
Since expectation is a linear operator, we get the lemma.
\end{proof}

\begin{lemma}
\label{lem:perfectmatching}
For the bipartite graph \(G = (L \cup R, E)\) described in Algorithm \ref{alg:AllocateChannels} (\AllocateChannels), 
\begin{align*}
\PR(G\ \text{has no perfect matching}) \le \theta^{-m},
\end{align*}
for some constant  \(\theta > 1\) and a large enough \(m\). 
Here \(L\) is the set of nodes corresponding to the list of selected users, and \(R\) is the set of channels.
An edge \((l,r) \in E\) iff the  channel \(r\) is ON for user \(l\).
\end{lemma}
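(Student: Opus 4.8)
The plan is to prove Hall's condition holds with high probability via a union bound over the potential blocking sets. Since $|L| = |R| = m$, by Hall's marriage theorem $G$ fails to have a perfect matching if and only if there is a set $S \subseteq L$ of left nodes with $|N(S)| \le |S| - 1$, equivalently $N(S) \subseteq T$ for some $T \subseteq R$ with $|T| = |S| - 1$. Writing $s = |S|$ and taking a union bound,
\begin{align*}
\PR(\text{no perfect matching}) \le \sum_{s=1}^{m} \sum_{\substack{S \subseteq L,\ |S|=s}}\ \sum_{\substack{T \subseteq R,\ |T|=s-1}} \PR\left(N(S) \subseteq T\right).
\end{align*}
By the edge definition in \AllocateChannels, the event $N(S) \subseteq T$ is exactly the event that each of the $m - s + 1$ channels of $R \setminus T$ is OFF for every user occupying a slot in $S$.

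The only subtlety is the correlation coming from repeated users. The left nodes are the slots produced by \SelectUsers, and a single user occupies at most two slots, whose channel states are identical. Letting $d = d(S)$ be the number of \emph{distinct} users among the slots of $S$, we have $d \ge \lceil s/2\rceil \ge s/2$. Under assumption {\bf A2}, channel states are independent across channels and across distinct users, and each is OFF with probability at most $1 - \bar{h}$, so the event factorizes over the $d$ distinct users and the $m-s+1$ off-channels:
\begin{align*}
\PR\left(N(S) \subseteq T\right) \le (1-\bar{h})^{\,d\,(m-s+1)} \le (1-\bar{h})^{\,\frac{s}{2}(m-s+1)}.
\end{align*}
This is precisely the extension of \cite[Lem.~1]{bodas14} alluded to in the proof sketch of Thm.~\ref{thm:fading}: duplicating a user at most twice costs only a factor of two in the exponent, which is absorbed into constants.

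It then remains to bound the double sum, which collapses to $\sum_{s=1}^m \binom{m}{s}\binom{m}{s-1}(1-\bar{h})^{\frac{s}{2}(m-s+1)}$. The exponent $\frac{s}{2}(m-s+1)$ is smallest near the two ends $s\approx 1$ and $s\approx m$, which is exactly where the crude estimate $\binom{m}{s}\le 2^m$ is too lossy; I expect this binomial bookkeeping to be the main obstacle. I would split the sum at $s=m/2$ and, writing $\ell := \min(s, m-s+1)$, use the sharper bounds $\binom{m}{s}\binom{m}{s-1}\le m^{2\ell}$ and $\frac{s}{2}(m-s+1)\ge \frac{m}{4}\,\ell$, valid on each half. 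Setting $\rho := m^2(1-\bar{h})^{m/4}$, every term is then at most $\rho^{\ell}$, and since each value of $\ell$ arises for at most two values of $s$,
\begin{align*}
\PR(\text{no perfect matching}) \le 2\sum_{\ell \ge 1}\rho^{\ell} = \frac{2\rho}{1-\rho}.
\end{align*}
Because $\rho = m^2(1-\bar{h})^{m/4}\to 0$ as $m\to\infty$, for all sufficiently large $m$ this is at most $4m^2(1-\bar{h})^{m/4}$, which is $\le \theta^{-m}$ for any fixed $\theta$ with $1 < \theta < (1-\bar{h})^{-1/4}$, completing the proof.
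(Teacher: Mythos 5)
Your proposal is correct and follows essentially the same route as the paper's proof: Hall's theorem, a union bound over blocking sets $S$ and channel subsets, the observation that $S$ contains at least $\lceil|S|/2\rceil$ distinct users (which handles the correlation from duplicated slots), and the bound $(1-\bar h)^{\lceil s/2\rceil(m-s+1)}$ leading to the same binomial sum $\sum_s\binom{m}{s}\binom{m}{s-1}(1-\bar h)^{s(m-s+1)/2}$. Your final bookkeeping via $\ell=\min(s,m-s+1)$ and a geometric series is a slightly cleaner packaging of the paper's symmetrization to $a\le\lceil m/2\rceil$ followed by bounding by the $a=1$ term, but it is the same estimate with the same constants.
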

\begin{proof}
Proof of this lemma follows along the lines of the proof of \cite[Lemma\ 1]{bodas14}.
The key idea is Hall's theorem, which states that for any bipartite graph \(G = (L \cup R, E)\) which does not have a perfect matching, there exists a set \(A \subseteq L\) whose neighborhood is smaller than itself, i.e., \(|\Gamma(A)| < |A|\), where
\begin{align*}
\Gamma(A) = \{r\ \mid\ \exists\ l \in L\ \text{such that}\ (l,r) \in E\}
\end{align*}
is the neighborhood of \(A\) (see \cite{bodas14} and the references therein).

Let \(a=|A|\).
For \(|\Gamma(A)| < a\), we need at least \(m-a+1\) channels to not be OFF for all the elements in a.
\(A\) contains at least \(\left\lceil\frac{a}{2}\right\rceil\) distinct users since no user can appear more than twice in \(L\).
The probability that a particular subset of \(R\) of size \(m-a+1\) has no ON connection to any element of \(A\) is therefore upper bounded by \((1-\bar{h})^{(m-a+1)\lceil a/2 \rceil}\).
Taking union bound over all sets of channels of size \(m-a+1\), we get
\begin{align*}
\PR(|\Gamma(A)| < |A|) &\le {m \choose m-a+1}(1-\bar{h})^{(m-a+1)\lceil a/2 \rceil} \\
&\le {m \choose m-a+1}\left(\sqrt{1-\bar{h}}\right)^{(m-a+1)a}.
\end{align*}

Further taking union bound over all non-empty subsets of \(L\), we get
\begin{align*}
\PR(G\ \text{has no perfect matching}) &\le 
    \sum_{a=1}^m {m \choose a}{m \choose m-a+1}\delta^{(m-a+1)a}
\end{align*}
where \(\delta = \sqrt{1-\bar{h}} < 1\).
This gives us
\begin{align*}
\PR(\text{No perfect matching}) &\le 
    2\sum_{a=1}^{\lceil m/2 \rceil}{m \choose a}{m \choose m-a+1}\delta^{(m-a+1)a} \\
&= 2\sum_{a=1}^{\lceil m/2 \rceil}{m \choose a}{m \choose a-1}\delta^{(m-a+1)a} \\
&\le 2\sum_{a=1}^{\lceil m/2 \rceil}m^{2a}\delta^{ma/2},
\end{align*}
where the last inequality follows from \({m \choose a} \le m^a\), \({m \choose a-1} \le m^a\), and \(m-a+1 \ge \frac{m}{2}\) for \(a\) in \(1,  \ldots, \left\lceil\frac{m}{2}\right\rceil\).

For a large enough \(m\), \(\left(m^2\delta^{m/2}\right) < 1\), and so \(\left(m^{2a}\delta^{ma/2}\right)\), has its maximum at \(a=1\).
This gives us
\begin{align*}
\PR(G\ \text{has no perfect matching}) &\le 
    2m \times m^2\delta^{m/2}.
\end{align*}
We can always find a \(\theta > 1\) such that for large enough \(m\), \(\left(2m^3\delta^{m/2}\right) \le \theta^{-m}\).
For example, set \(\theta = \frac{2}{1+\delta}\).
Since \(\delta < 1\), this gives us \(\theta > 1\), and concludes our proof.
\end{proof}

Now we are in a position to prove Thm. \ref{thm:fading}.
Using the assumptions {\bf A1-A3}, we get 
\begin{align*}
\EX&\left[S_i^\AllocateChannels(t)\right] \ge \\
&\EX\left[S_i^\AllocateChannels(t)\ \mid\ \text{we find a perfect matching at}\ t\right]\ \times \\
&\qquad\qquad\qquad\qquad\qquad\qquad \PR(\text{we find a perfect matching at}\ t) \\
=&\ k\mc{E}\alpha_i^*\ \times\ \PR(\text{we find a perfect matching at}\ t),
\end{align*}
where the last equality follows from Lem. \ref{lem:selectgivesalpha}.
Using Lem. \ref{lem:perfectmatching}, we get
\begin{align*}
\bar{s}_i^\AllocateChannels \ge k\mc{E}\alpha_i^*\ \times\ (1-\theta^{-m}).
\end{align*}
Since the outputs of \ConcMin, \(\{\alpha_i^*\}\) satisfy \(0 \le \alpha_i^* \le p_i\), this gives us
\begin{align*}
\kappa_i^\AllocateChannels \le p_i - \alpha_i^* + \alpha_i^* \theta^{-m}.
\end{align*}
Using assumption {\bf A1}, we get
\begin{align*}
V_i(\kappa_i^\AllocateChannels) \le V_i(p_i-\alpha_i^*) + G\alpha_i^*\theta^{-m},
\end{align*}
or
\begin{align*}
\sum_{i\in[n]} V_i(\kappa_i^\AllocateChannels) &\le \sum_{i\in[n]}V_i(p_i-\alpha_i^*) + \left(G\theta^{-m}\sum_{i\in[n]}\alpha_i^*\right) \\
&\le \bar{V}^{n,m} + Gm\theta^{-m}.
\end{align*}

For a large \(m\), we can always find a \(\gamma\) such that \(Gm\theta^{-m}\le\gamma^{-m}\) for all \(n\). 
For example, use \(\gamma=\frac{1+\theta}{2}\).
Since \(\theta>1\), we get \(\gamma>1\), and thus for a large \(m\),
\begin{align*}
V^{n,m}(\AllocateChannels)-\bar{V}^{n,m} \le \gamma^{-m}.
\end{align*}

\section{Proof of Theorem \ref{thm:bandit}} \label{app:bandit}
Over a time horizon $T$ the total regret can be divided into the following parts according to phases: regret over phases $r^q$ to $r^{q+1}$ for $q=0$ to $\lfloor \log_r T\rfloor-1$ and the regret over the remaining epochs till $T$.

By simple concentration inequality for an i.i.d. Bernoulli process and assumption {\bf A2}, the probability that the estimates of all $p_i$ are correct after phase $r^q$ is upper-bounded by 
$2n \exp(-\frac{\min_{i\neq j} |p_i-p_j|^2}{2} q)$.  Let us first bound the regret assuming that the $p_i$ estimates are correct.

Let us first consider the case without fading, i.e., $\mbf{H}(t)=\mbf{1}$. In this case, consider  for any $i$ with $\alpha^*_i>0$
$\alpha_i=\alpha^*_i-\delta$ for some $\delta>0$. By coupling the arrival into the original queue with that of this concocted dynamics one can directly argue that the expected number of pauses in the original dynamics is upper bounded by that of this dynamics. So, it is enough to bound the expected number of pauses in this concocted dynamics. 

The expected number of pauses for user $i$ till time $t$ is upper bounded by the expected duration for which its buffer stays empty between time $0$ and $t$ times $p_i(1-\alpha_i)$.. 
Note that the duration for which the buffer stays empty can be divided into phases of the algorithm. Further, for obtaining an upper bound one can assume that the buffer restarts from $0$ at the beginning of every phase. This again follows using an elementary coupling.

Using Proposition 4.1 of \cite{HerveL2014}, for any $i$ with $\alpha_i^*>0$ the expected duration the  buffer stays  empty during a phase, given $p_i$ estimates are correct,  is upper bounded by

\[w r^q(r-1) \pi_i(0) + O(1/\delta^2),\]
where $\pi_i(0)$ is the stationary probability of the concocted Markov chain to be at $0$. This follows by considering the transitions of the concocted Markov chain and computing the right parameters in \cite[Prop.~4.1]{HerveL2014}.

As the concocted chain is a lazy birth death chain, it follows that $\pi_i(0)$ is  $\frac{p_i-\alpha_i}{p_i(1-\alpha_i)}$, for $\alpha_i<p_i$. 

Hence, the expected number of buffering pauses for user $i$ over a time horizon $T$ can be upper-bounded by

\[T  \pi_i(0) p_i~(1-\alpha_i) + O(1/\delta^2) (\log_r(T)+1).\]

As $\alpha_i*\le p_i$, this, in turn, can be written as

\[ T \max(p_i-\alpha_i^*, 0) + T  (\alpha_i^*-\alpha_i) + (\log_r(T)+1)~O(1/\delta^2).\]

Proof of the case with $\mbf{H}(t)=\mbf{1}$ is completed by combining the above bound with the fact that  $p_i$ estimates can be wrong with probability no more than $2n \exp(-\frac{\min_{i\neq j} |p_i-p_j|^2}{2} q)$. The final bound follows by choosing the right constants mentioned in the theorem and $\delta=\frac{1}{T^{1/3}}$.

For the i.i.d. fading case, note that  the dynamics of $\{Q_i(t)\}$ is same as  that of the concocted Markov chain in the no fading case with $\delta=\theta^{-m}$.  This is because  in the fading case we derived (Appendix \ref{sec:fadingproof}, after Lemma~\ref{lem:perfectmatching}) that under our proposed \AllocateChannels~$\alpha_i \ge \alpha_i^*-\theta^{-m}$. The result follows by plugging in the  parameter values mentioned in the theorem.

\section{Proof of Lemma \ref{thm:uncertainpiworks}}\label{sec:uncertainpiproof}
The problem we are trying to solve can be written as:
\begin{align}
\underset{\{\alpha_i\}}{\text{minimize}}\quad& \mathbb{E}\left[ 
                \sum_i w_i \left(p_i - \alpha_i\right)^+ \right] 
                \tag{UPI} \label{prog:uncertainpi}\\
\text{subject to}\quad& \alpha_i \ge 0 
                \quad \forall\ i \in [n], \label{eq:alphapositive} \\
& \alpha_i \le b_i 
                \quad \forall\ i \in [n], \label{eq:alphasensible} \\
& \sum_i \alpha_i \le c. \label{eq:alphalec}
\end{align}

Let \(\{\alpha_i^*\}\) be the optimal solution to this program.
Recall that \(g_i(x)\) is non-zero iff \(x \in [a_i, b_i]\).
Further, partition the set \([n]\) into the following:
\begin{align*}
P = \{i \mid \alpha_i^* = 0\}, \
Q = \{i \mid 0 < \alpha_i^* < a_i\}, \
R = \{i \mid a_i \le \alpha_i^* < b_i\}, \
S = \{i \mid \alpha_i^* = b_i\}.
\end{align*}

Let the KKT multipliers for constraints
\eqref{eq:alphapositive}, \eqref{eq:alphasensible}, and \eqref{eq:alphalec}
be \(\theta_i\), \(\phi_i\), and \(\lambda\) respectively.
\ref{prog:uncertainpi} is clearly a convex program and writing down the 
KKT conditions and simplifying them gives us:
\begin{align}
\lambda &= w_i + \theta_i \quad \forall\ i \in P, \label{eq:lambdaP}\\
\lambda &= w_i \quad \forall\ i \in Q, \label{eq:lambdaQ}\\
\lambda &= w_i (1 - G_i(\alpha_i^*)) \quad \forall\ i \in R, \label{eq:lambdaR}\\
\lambda &= -\phi_i \quad \forall\ i \in S. \label{eq:lambdaS}
\end{align}

We can see that if \(S\) is non-empty, we have \(\lambda = 0\) since all the KKT multipliers have to be non-negative, which 
implies that \(P\), \(Q\), and \(R\) are empty.
This gives us the case when \(\sum_i b_i \le c\) and hence
\(\alpha_i^* = b_i\) for all \(i\).

When \(S\) is
empty, \(\lambda\) acts as a threshold for giving
users a non-zero rate: if \(w_i < \lambda\), user \(i\) gets a rate \(0\),
and if \(w_i > \lambda\), user \(i\) gets a rate \(\alpha_i^* \ge a_i\).
Among the users where \(\lambda = w_i\), we can divide the rate left over 
after allocating to users with a higher \(w_i\) in any way.
Using the fact that \(w_1 < w_2 < \ldots < w_n\), \UncertainPi\ first finds this \(\lambda\) consistent with Equations \eqref{eq:lambdaP}, \eqref{eq:lambdaQ}, and \eqref{eq:lambdaR} in Steps \ref{alg:uncertainpi:startlambdasearch}-\ref{alg:lin:solvelambda} of Algorithm \ref{alg:uncertainpi}.
Then \UncertainPi\ computes the optimal rates are computed using Equations \eqref{eq:lambdaP} and \eqref{eq:lambdaR} in Steps \ref{alg:uncertainpi:startsettingalpha}-\ref{alg:uncertainpi:stopsettingalpha}.
Any remaining rate is given to the user satisfying \eqref{eq:lambdaQ}.
Since the rates \(\{\alpha_i^*\}\) we get this way satisfy the KKT conditions, they are an optimal solution for the program \ref{prog:uncertainpi}.

\end{document}